\newcommand{\apref}[1]{Appendix~\ref{sec:#1}}
\title{The {\em Umwelt\/} of an Embodied Agent -- \\
A Measure-Theoretic Definition\footnote{Preprint of an article in \emph{Theory in Biosciences}, 134 no.\ 3, 2015. doi:10.1007/s12064-015-0217-3}} 
\author{Nihat Ay${}^{1,2,3}$  \and Wolfgang L\"ohr${}^{4}$ }
\begin{document}

\maketitle

\begin{center}
${}^{1}$Max Planck Institute for Mathematics in the Sciences, Inselstrasse 22, 04103 Leipzig, Germany \\
${}^{2}$Faculty of Mathematics and Computer Science, University of Leipzig, \\ PF 100920, 04009 Leipzig, Germany \\
${}^{3}$Santa Fe Institute, 1399 Hyde Park Road, Santa Fe, New Mexico 87501, USA \\
${}^{4}$Universit\"at Duisburg-Essen, Thea-Leymann-Strasse 9, 45117 Essen, Germany \\ 
\end{center}

\begin{abstract} 
We consider a general model of the sensorimotor loop of an agent interacting with the world. This formalises Uexk{\"u}ll's notion of a {\em function-circle\/}. 
Here, we assume a particular causal structure, mechanistically described in terms of Markov kernels. In this generality, we 
define two $\sigma$-algebras of events in the world that describe two respective perspectives: (1) the perspective of an external observer, (2) the intrinsic perspective of the agent. Not all aspects of the world, seen from the external perspective, are accessible to the agent. This is expressed 
by the fact that the second $\sigma$-algebra is a subalgebra of the first one. 
We propose the smaller one as formalisation of Uexk{\"u}ll's {\em Umwelt\/} concept. 
We show that, under continuity and compactness assumptions, 
the global dynamics of the world can be simplified without changing the internal process. 
This simplification can serve as a minimal world model that the system must have in order to be consistent with the internal process.   \\
 
{\bf \em Keywords:\/} {\em Umwelt\/}, {\em function-circle\/}, sensorimotor loop, embodied agent, intrinsic perspective, external observer, $\sigma$-algebra. 

{
\small
\tableofcontents
}
\end{abstract}


\newpage

\section{Introduction: the intrinsic view of embodied agents}

\subsection{Uexk{\"u}ll's {\em function-circle\/} and the sensorimotor loop}
A key observation based on many case studies within the field of embodied intelligence 
implies that quite simple control mechanisms can 
lead to very complex behaviours \cite{PfeiferBongard}. 
This gap between simplicity and complexity related to the same thing, the agent's behaviour, is the result of two different frames of 
reference in the description. Here, the intrinsic view of the agent, which provides the basis for its control, 
can greatly differ from the (extrinsic) view of an external observer.
This important understanding is not new. In the first half of the last century, Uexk{\"u}ll has conceptualised this understanding by his notion of {\em Umwelt\/}, which summarises all 
aspects of the world that have an effect on the agent and can be affected by the agent (see \cite{Uexkull}). Furthermore, he has convincingly   
exemplified this notion in terms of many 
biological case studies. These case studies are presented in his book \cite{Uexkull34}, supplemented by insightful illustrations (see in Figure \ref{biene}, as an example, the {\em Umwelt\/} of a bee).  \\

\begin{figure}[h]
\begin{center}        
               \includegraphics[width=14.5cm]{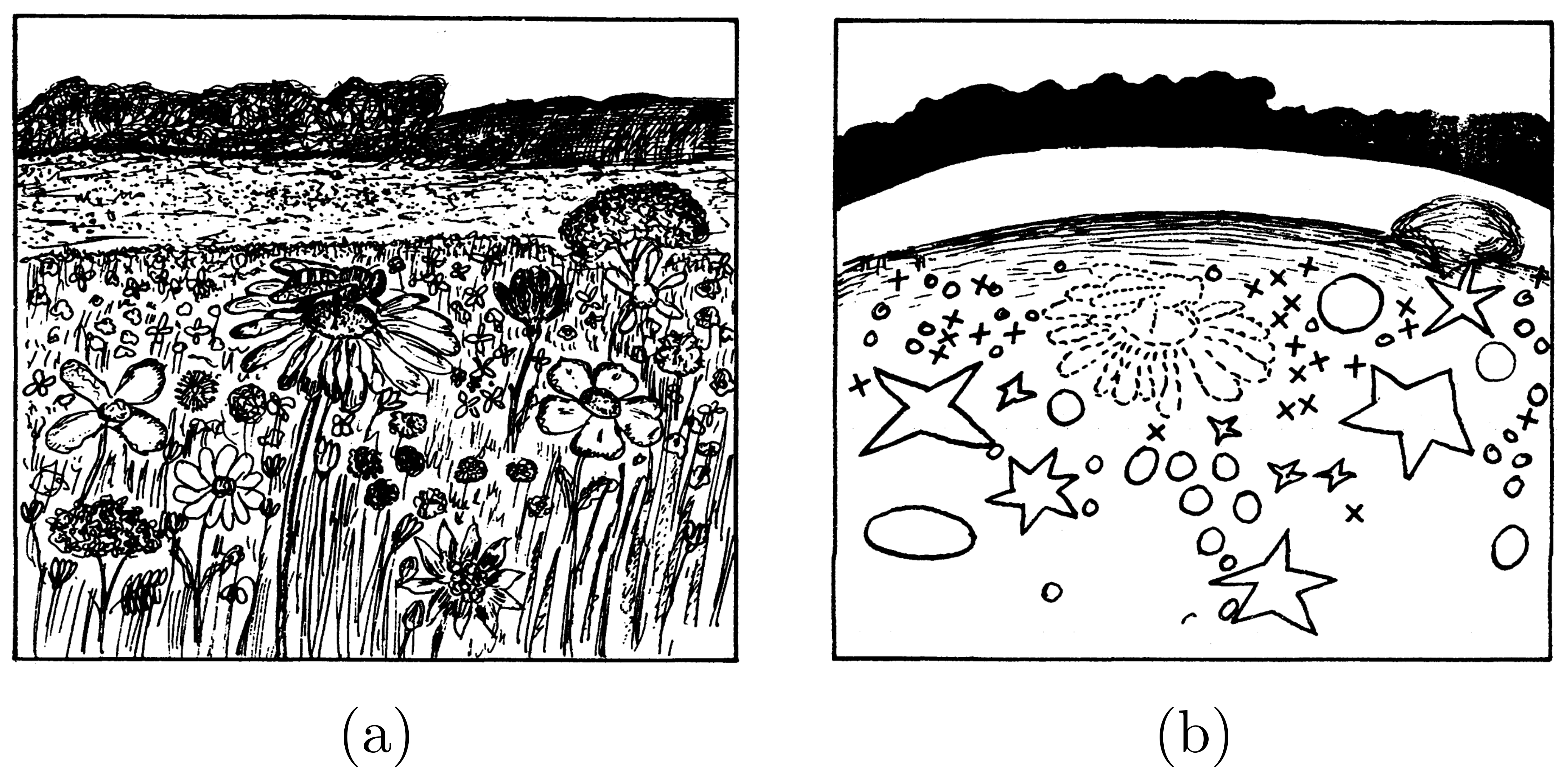}
                \vspace{-2mm}
\end{center}
\caption{The {\em Umwelt\/} of a bee as illustrated in \cite{Uexkull34}. (a) The environment of a bee how we perceive it as an external observer. (b) The same bee perceives only particular aspects of the same world, which 
constitute its {\em Umwelt\/}.}
\label{biene}
\end{figure}

Uexk{\"u}ll has developed his {\em Umwelt\/} concept based on the notion of a {\em function-circle\/} ({\em Funktions-kreis\/}, see Figure \ref{funktionskreis} (a)). It graphically represents 
the causal interaction of an animal with its surroundings. Nowadays this circle is known as the {\em sensorimotor loop\/} and it plays an important role 
within the field of embodied cognition (see Figure \ref{funktionskreis} (b)). However, its interpretation has not changed, so that Uexk{\"u}ll's description of the {\em function-circle\/}
perfectly applies to the sensorimotor loop:  

\begin{quote}
``Every animal is a subject, which, in virtue of the structure peculiar to it, selects stimuli from the general influences of the outer world, and to these it responds in a certain way. These responses, 
in their turn, consist of certain effects on the outer world, and these again influence the stimuli. In this way there arises a self-contained periodic cycle, which we may call the function-circle of the animal.'' 
(\cite{Uexkull26}, page 128)  
\end{quote}

\begin{figure}[h]
\begin{center}        
               \includegraphics[width=14.5cm]{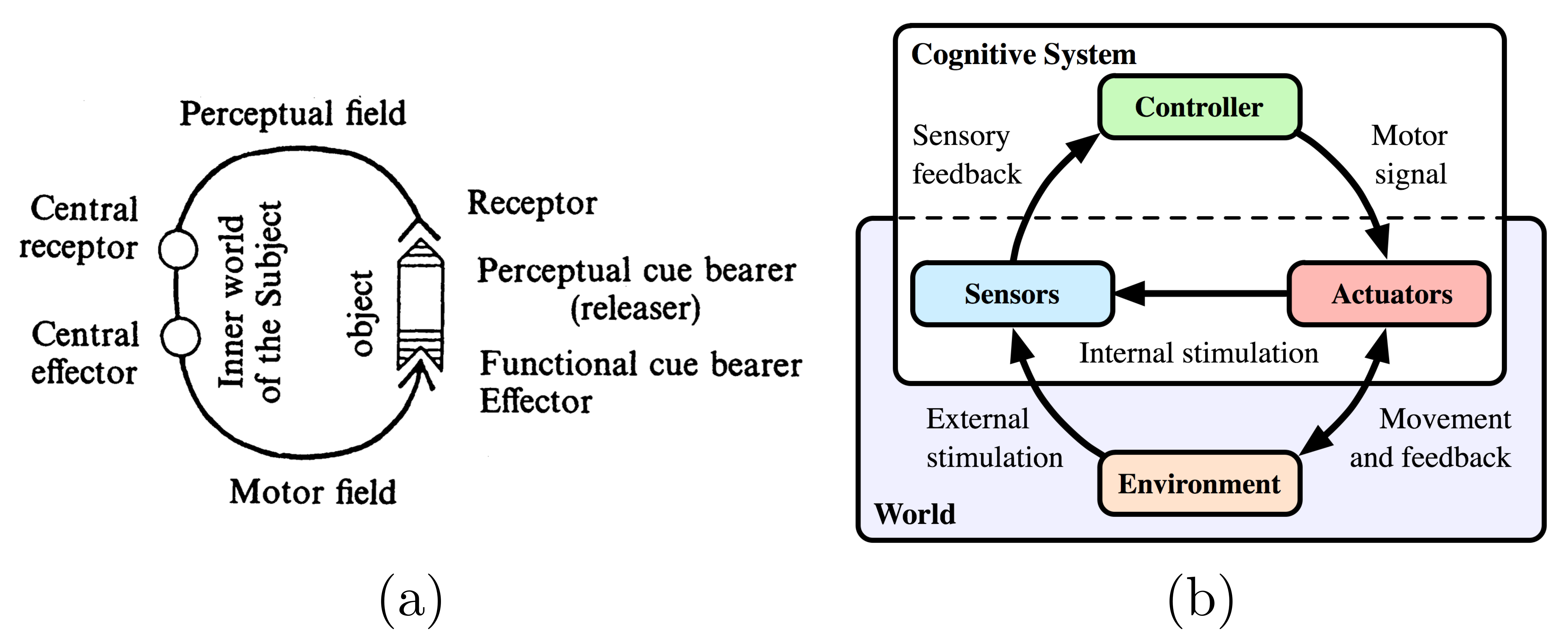}
               \vspace{-5mm}
\end{center}
\caption{(a) Uexk{\"u}ll's {\em function-circle\/} ({\em Funktionskreis\/}) \cite{Uexkull34}, (b) the sensorimotor loop from the field of embodied cognition \cite{AyZahedi}.} 
\label{funktionskreis}
\end{figure}
In this section, we introduce the {\em sensorimotor loop\/} in terms of a causal diagram \cite{Pearl} of the involved processes, 
which describes the interaction of the agent with the world \cite{TishbyPolani, ZahediAyDer}. 
In addition to the causal structure of this interaction, as shown in Figure \ref{sensomot}, we need to formalise the mechanisms that generate the individual processes.  
A very general way of formalising mechanisms is provided by the notion of a Markov kernel, which is also used in information theory for a mathematical description of a channel.
As an example, let us consider the sensor mechanism denoted by $\beta$. Given a state $w$ of the world, the sensor assumes a state $s$, which can be subject to some noise. Therefore, the 
sensor output is best described as a probability distribution over the sensor states, that is $\beta(w;ds)$. This reads as the probability that the sensor assumes a state $s$ in the infinitesimal  
set $ds$. Equivalently, we can consider the probability $\beta(w;B)$ that the sensor assumes a state $s$ in the set $B$ given the world state $w$. Clearly, we get this probability of $B$ by integrating 
the probabilities of the infinitesimal sets $ds$ contained in $B$: $\beta(w;B) = \int_B \beta(w; ds)$. In addition to the sensor mechanism $\beta$ we have the actuator mechanism $\pi$, referred to as the agent's {\em policy\/}. Finally, 
the mechanisms $\alpha$ and $\varphi$ describe the dynamics of the world and the agent, respectively. Note that the probabilistic description of the mechanisms does not mean that we exclude deterministic mechanisms. 
For instance, one might want to assume that the dynamics of the world is deterministic. Together with an initial distribution of $W_0,S_0,C_0,A_0$, the Markov kernels $\alpha$, $\beta$, $\varphi$, and $\pi$, 
the mechanisms of the sensorimotor loop, specify the distribution of the overall process $W_n, S_n, C_n, A_n$, $n\in {\Bbb N}$, 
consisting of the individual processes of the world (W), the sensors (S), the agent (C), and the actuators (A).   

\newcommand{\actkern}{{\qquad\qquad\raisebox{1ex}{$\alpha$}}}
\newcommand{\intkern}{{\qquad\qquad\raisebox{1ex}{$\varphi$}}}
\newcommand{\senkern}{{\textstyle\beta}}
\newcommand{\polkern}{{\textstyle\pi}}

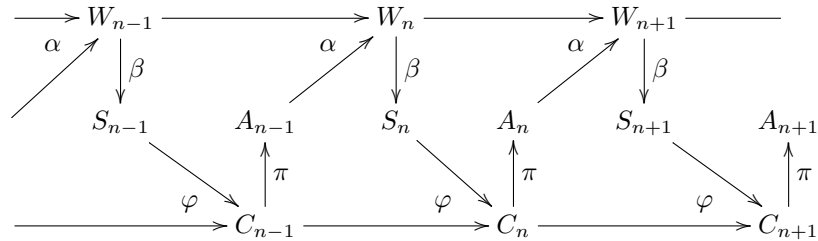
\begin{figure}[h]
\begin{center}        
\[ \xymatrix{
	\ar[r]&	W_{n-1}\ar[rr]\ar[d]^\senkern & & W_n\ar[rr]\ar[d]^\senkern & & W_{n+1}\ar@{-}[r]\ar[d]^\senkern &\\
	\ar[ur]^\actkern& S_{n-1}\ar[dr] & A_{n-1} \ar[ur]^\actkern & S_n\ar[dr] & A_n\ar[ur]^\actkern & S_{n+1}\ar[dr]&A_{n+1}\\
	\ar[rr]^\intkern  &&{C_{n-1}} \ar[rr]^\intkern \ar[u]_\polkern & & {C_n} \ar[rr]^\intkern \ar[u]_\polkern & & {C_{n+1}}\ar[u]_\polkern
} \]
\end{center}
\caption{The causal diagram of the sensorimotor loop. In each instant of time the agent (C) takes a measurement from the world (W) 
through its sensors (S) and affects the world through its actuators (A). 
}
\label{sensomot}
\end{figure}

The causal model of the sensorimotor loop will 
allow us to formalise what we mean by intrinsic and extrinsic frames of reference in terms of $\sigma$-algebras.
These are basic mathematical objects from measure theory that naturally describe a set of observables which is assigned 
to an observer.  
Already at this very general level, a description of the gap between the intrinsic and extrinsic perspective is possible. 


%

\subsection{Uexk{\"u}ll's {\em Umwelt\/} in terms of a $\sigma$-algebra}
With this structure at hand, we can formalise Uexk{\"u}ll's agent centric {\em Umwelt\/}. Note that the world considered in the sensorimotor loop is meant to be 
the world as it can be seen from the perspective of an external observer, which is also referred to as {\em outer world\/} by Uexk{\"u}ll \cite{Uexkull26}. 
This perspective is clearly not accessible to the agent. 
The agent has its own intrinsic view at this world, which implies an agent specific world, consisting only of those objects in the world that the agent can perceive and affect. \\

Now let us be a bit more precise. Assume that we have two world states $w$ and $w'$ that are distinguishable from the perspective of an external observer. The agent's sensorimotor apparatus, however, may not 
be rich enough for this distinction. 
The agent would then perceive $w$ and $w'$ as being the same 
world state. By this identification, the original set of world states is partitioned into classes that represent the states of the agent's world, its {\em Umwelt\/}. 
This partition is illustrated in Figure \ref{internalstates}. \\

\begin{figure}[h]
\begin{center}        
             \includegraphics[width=6cm]{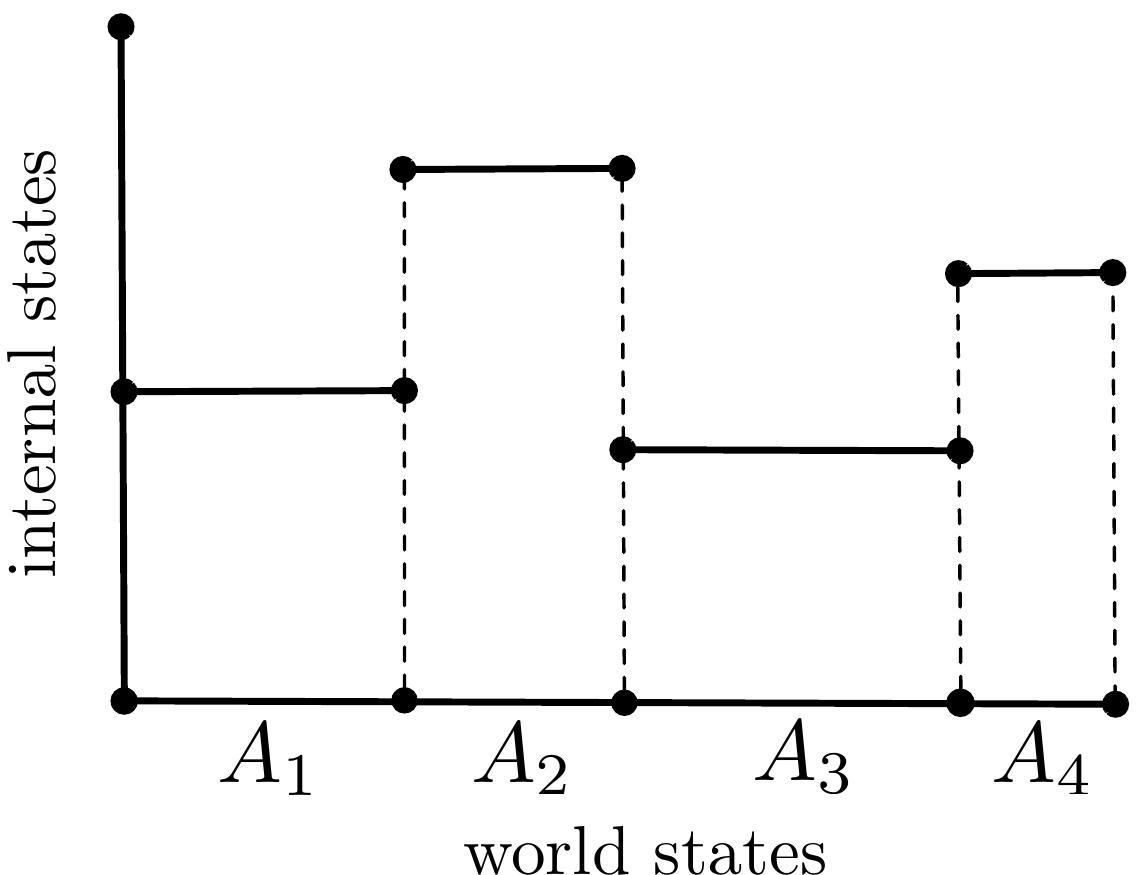} 
               \vspace{-1mm}
\end{center}
\caption{Clustering of world states.}
\label{internalstates}
\end{figure}
Let us first assume that we have only finitely many of these classes, say $A_1, A_2, \dots,A_n$, which represent the internal states of the agent. 
Given a world state $w$, the agent will assume one of these classes as internal state, the one that contains $w$, which we denote by $A(w)$.
Now consider an arbitrary subset $A$ of world states.
We call $A$ a {\em distincion\/} (that the agent can make with its internals states) if the following holds for all world states $w$: Knowing the internal state $A(w)$ is sufficient 
to decide whether or not $w$ is contained in $A$.  
Clearly, the individual classes themselves are distinctions. But these are not the only ones. For example, $A = A_1 \cup A_2$ is also a distinction. 
To see this we have to consider three cases: (1) $w \in A_1$: then $w \in A$, (2) $w \in A_2$: then $w \in A$, (3) $w \in A_i$ for some $i > 2$: then $w \notin A$. Thus, if we know to which class $w$ belongs then we know whether or not 
$w$ is contained in $A$. More generally, we have the following set ${\mathcal A}$ of distinctions that the agent can make with the internal states $A_1,\dots,A_n$:
\begin{equation} \label{endlichealgebra}
      {\mathcal A} 
       \; = \; \left\{ A_{i_1} \cup A_{i_2} \cup \dots \cup A_{i_k} \; : \; 1 \leq i_1 < i_2 < \dots < A_{i_k} \leq n \right\}.
\end{equation}
This set is closed under natural operations. 
Clearly, if $A$ is a distinction, then the complement $A^c$ of $A$ is also a distinction. Furthermore, unions and intersections of distinctions are also distinctions.     
Having such a set ${\mathcal A}$ of distinctions, we can recover the class to which a world state $w$ belongs by 
\begin{equation} \label{class}
         A(\omega) \, = \, {[w]}_{\mathcal A} \, := \, \bigcap_{A \in {\mathcal A} \atop w \in A} A .  
\end{equation}
This correspondence between partitions into classes and sets of distinctions is one-to-one in the finite case. However, if we drop that assumption 
the correspondence does not hold anymore, and we have to work with sets of distinctions in the first place, as they encode more information than the corresponding partitions.   
Extending our reasoning to possibly infinitely many distinctions, we have to assume that the set ${\mathcal A}$ of distinctions an agent can make with its sensorimotor apparatus satisfies the 
conditions of a $\sigma$-algebra, that is:
\begin{enumerate}
\item $\emptyset \in {\mathcal A}$, 
\item $A \in {\mathcal A} \quad \Rightarrow \quad A^c \in {\mathcal A}$, 
\item $A_1,A_2,\dots \in {\mathcal A} \quad \Rightarrow \quad \bigcup_{i=1}^\infty A_i \in {\mathcal A}$.
\end{enumerate}
Clearly, as a special case, the set (\ref{endlichealgebra}) forms a $\sigma$-algebra. Given an arbitrary $\sigma$-algebra ${\mathcal A}$, we can use formula (\ref{class}) in order to 
define the classes that it generates. As already mentioned, this set of classes alone contains less information than the set ${\mathcal A}$ of distinctions. 
This is the reason why we consider in this article state sets {\em together\/} with $\sigma$-algebras on these state sets, leading to the notion of so-called measurable spaces.  \\

Note that we do not address the problem of how the sensorimotor apparatus of the agent
might instantiate a set of distinctions.  
In this regard, we want to highlight the following problem of information integration. Let us associate with a distinction $A$ a sensor of the agent that is active if the world state $w$ is in $A$, and inactive if it is not. 
Then consider two of such sensors corresponding to the distinctions $A_1$ and $A_2$. In principal, knowing that $w$ is in $A_1$, through the first sensor, 
and knowing that $w$ is also in $A_2$, through the second sensor, implies that $w \in A_1 \cap A_2$. 
But this implication is a purely logical one. It is not clear whether there should be an instance in the system that 
actually makes the distinction $A_1\cap A_2$ through a corresponding third sensor. Note also, that we do not assume that the agen's distinctions imply any 
kind of conscious experiences. \\

In our motivation of $\sigma$-algebras as the right model for describing the intrinsic perspective of an agent we did not explicitly specify the mapping from the world states to internal states. Clearly, this has to be done
based on the formal model of the sensorimotor loop as shown in Figure \ref{sensomot}.
In order to explain how we are going to use $\sigma$-algebras in the context of the sensorimotor loop, consider, for instance, the sensor mechanism $\beta$, which generates a sensor state $s$, given a world state $w$ 
(Note that $\beta$ is meant to incorporate all sensors of the agent, not only one.) 
In general, $s$ will contain information about the world state, which will allow the agent to distinguish it from other world states. One can assign to $\beta$ a set of distinctions, a $\sigma$-algebra, that describes 
the world as seen through the immediate response $s$ of the sensor $\beta$. This $\sigma$-algebra is denoted by $\sigma(\beta)$ and referred to as the $\sigma$-algebra generated by $\beta$. 
Note, however, that $\sigma(\beta)$ does not contain all distinctions that the agent can make. 
There are further distinctions, mediated through the time evolution, 
which incorporates the actuator process of the agent. Therefore, the distinctions that we are going to study are based on both, the sensors and actuators of the agent.      
The intention of the article is to define a 
$\sigma$-algebra of distinctions in the world that describe Uexk\"ull's notion of an agent centric world, the agent's {\em Umwelt\/}.  
In what follows, we address the following two natural problems: 
\begin{enumerate}
\item Which structure in the world is used by the mechanisms of the sensorimotor loop?    
\item Which structure of the world is visible from the intrinsic perspective of the agent?  
\end{enumerate}
We will show that these problems can be appropriately addressed by defining corresponding $\sigma$-algebras. 
The nature of the main results requires some technical knowledge.   
We assume basic knowledge from measure and probability theory and refer to the comprehensive volumes \cite{BogachevI, BogachevII} on measure 
theory and to the textbook \cite{BauerEng} on probability theory. However, the technical part will be complemented by an extended summary and conclusions section on the results and how they relate to Uexk{\"u}ll's work.
The reader interested in the results at a less formal level might want to first read Section \ref{conclusions}.   

\section{Minimal $\sigma$-algebras of the world}\seclabel{separately}
\subsection{Minimal separately measurable $\sigma$-algebra} \label{firstsigma}
We consider measurable spaces $\W$, $\S$, $\C$, and $\A$ as state spaces of the sensorimotor process $W_n, S_n, C_n, A_n$, $n\in {\Bbb N}$. 
For technical reasons, we assume that these are Souslin spaces, equipped with their respective Borel $\sigma$-algebras $\B(\W)$, $\B(\S)$, $\B(\C)$, and $\B(\A)$. 
In the  \apref{quotient}, we collect a few general results that are used in this article, thereby also highlighting the special role of Souslin spaces.  
In order to address the above problems, we fix the Borel \sigalgs\ on the ``agent part''---$\S$, $\C$, and $\A$---of the system. Based on these internal 
$\sigma$-algebras, 
we consider various sub-$\sigma$-algebras of the Borel \sigalg\ on $\W$ that describe  agent related events in the world. 
In measure-theoretic terms,  we study minimal $\sigma$-algebras on $\W$ that satisfy natural measurability conditions. The most natural ansatz is given by the distinctions
that are possible only through sensor measurements. They correspond to the $\sigma$-algebra generated by the kernel $\beta$, that is $\sigma(\beta)$. However, this is not necessarily 
consistent with the world dynamics given by the Markov kernel $\alpha$. Therefore, we consider the following measurability condition.    
We call a
\sigalg\ $\Wsig\subseteq\B(\W)$ \define{(jointly) measurable} if both $\beta$ and $\alpha$ remain measurable
when $\W$ is equipped with $\Wsig$ instead of $\B(\W)$. By general assumption, $\B(\W)$ is jointly measurable.
It turns out that joint measurability is a quite strong condition. Therefore, 
it is natural to consider the following weaker measurability condition. We call $\Wsig$ \define{separately measurable} if,
when we equip $\W$ with $\Wsig$, $\beta$ is measurable and $\alpha$ is separately measurable in the sense that for every
$a\in\A$, the Markov kernel 
\[
    {\alpha_a} : (\W, \Wsig)  \to \P(\W, \Wsig), \qquad w \mapsto \alpha(a, w),
\]
is measurable (``${\mathcal P}$'' denotes the set probability distributions, here on the measurable space $(W,{\mathcal W})$).  Note that, because
$\alpha$ is Borel measurable, the functions $a\mapsto\alpha(a, w)$ are measurable for any $\Wsig\subseteq\B(\W)$.

It is straight-forward to construct the unique minimal (w.r.t.\ partial ordering by inclusion)
separately measurable sub-\sigalg\ $\Wsep$ of $\B(\W)$.

\begin{lemma}\lemlabel{sep}
	Let\/ $\Wsig_0\defeq\sigma(\beta)$ and for\/ $n \in {\Bbb N}$ define\/ $\Wsig_n$ recursively by
		\[ \Wsig_n \defeq \sigma\(\alpha_a\colon \W\to\P(\W, \Wsig_{n-1}),\;a\in\A\) \=
		\sigma\( \kernel{\alpha_a}{\fdot}{B},\;\;a\in\A,\,B\in\Wsig_{n-1} \) \]
	Then\/ $\Wsep\defeq\sigma\(\bigcup_{n \in {\Bbb N}} \Wsig_n\)$ is the unique minimal separately measurable \sigalg, i.e.\
		\[ \Wsep \= \bigcap\bset{\Wsig\subseteq\B(\W)}
			{\text{$\Wsig$ \sigalg, $\beta$ measurable, $\alpha_a$ $\Wsig$-$\Wsig$-measurable for all $a\in\A$}} \]
\end{lemma}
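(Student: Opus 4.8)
The plan is to prove the displayed identity by establishing two inclusions: that $\Wsep$ is itself separately measurable — so that it belongs to the family of \sigalg s being intersected, hence contains the intersection — and that $\Wsep\subseteq\Wsig$ for \emph{every} separately measurable $\Wsig\subseteq\B(\W)$ — so that it is contained in the intersection. Together these identify $\Wsep$ with the intersection and exhibit it as the minimum, hence the unique minimal element, of the family.

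The second inclusion is the routine direction. Given a separately measurable $\Wsig$, I would show by induction on $n$ that $\Wsig_n\subseteq\Wsig$. The base case $\Wsig_0=\sigma(\beta)\subseteq\Wsig$ holds because $\beta$ is $\Wsig$-measurable. For the inductive step, assume $\Wsig_{n-1}\subseteq\Wsig$; since $\Wsig$ is separately measurable, $w\mapsto\alpha(a,w)(B)$ is $\Wsig$-measurable for every $a\in\A$ and every $B\in\Wsig$, hence in particular for every $B\in\Wsig_{n-1}$, so all the generators of $\Wsig_n$ are already $\Wsig$-measurable and $\Wsig_n\subseteq\Wsig$. Passing to the union over $n$ and then to the generated \sigalg\ gives $\Wsep\subseteq\Wsig$.

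For the first inclusion one must check that both $\beta$ and each $\alpha_a$ remain measurable when $\W$ carries $\Wsep$. Measurability of $\beta$ is immediate, since $\sigma(\beta)=\Wsig_0\subseteq\Wsep$. Fix $a\in\A$. Because the \sigalg\ on $\P(\W,\Wsep)$ is generated by the evaluation maps $\mu\mapsto\mu(B)$, $B\in\Wsep$, measurability of $\alpha_a\colon(\W,\Wsep)\to\P(\W,\Wsep)$ is equivalent to $\Wsep$-measurability of $w\mapsto\alpha(a,w)(B)$ for every $B\in\Wsep$. I would therefore introduce the class $\mathcal D\defeq\bset{B\in\Wsep}{w\mapsto\alpha(a,w)(B)\text{ is }\Wsep\text{-measurable for every }a\in\A}$ and aim to prove $\mathcal D=\Wsep$. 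Since each $\alpha(a,w)(\fdot)$ is a probability measure, $\mathcal D$ contains $\W$ and is closed under complementation (via $B\mapsto 1-\alpha(a,w)(B)$) and under countable disjoint unions (via countable sums of measurable functions), so $\mathcal D$ is a $\lambda$-system; and by the recursion, $w\mapsto\alpha(a,w)(B)$ is $\Wsig_n$-measurable whenever $B\in\Wsig_{n-1}$, whence $\bigcup_n\Wsig_n\subseteq\mathcal D$. The remaining task is to upgrade $\bigcup_n\Wsig_n\subseteq\mathcal D$ to $\Wsep=\sigma\(\bigcup_n\Wsig_n\)\subseteq\mathcal D$ by the Dynkin $\pi$-$\lambda$ theorem, which requires producing a $\pi$-system inside $\mathcal D$ that generates $\Wsep$.

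I expect this last step to be the real obstacle. The class $\mathcal D$ is only a $\lambda$-system; it need not be closed under finite intersections, since $\alpha(a,w)(B\cap B')$ is not controlled by $\alpha(a,w)(B)$ and $\alpha(a,w)(B')$, so one cannot simply assert that $\mathcal D$ is a \sigalg\ containing every $\Wsig_n$. What makes the Dynkin step go through is that the generators of $\Wsep$ can be organised into a $\pi$-system lying in $\mathcal D$ — most transparently the algebra generated by $\bigcup_n\Wsig_n$, once one knows this union is itself an algebra (equivalently, that the $\Wsig_n$ may be taken nested, each containing $\sigma(\beta)$ and $\Wsig_{n-1}$, without enlarging $\Wsep$), which is exactly what the recursive construction is designed to guarantee. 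With a generating $\pi$-system secured inside $\mathcal D$, Dynkin yields $\mathcal D=\Wsep$, so $\Wsep$ is separately measurable; the uniqueness assertion then follows formally from the two inclusions, since any two minimal separately measurable $\sigma$-algebras must contain one another.
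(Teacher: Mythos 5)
Your proof is essentially the paper's: the minimality direction is the same induction on $n$, and separate measurability of $\Wsep$ is reduced, exactly as in the paper, to checking $\Wsep$-measurability of $w\mapsto\kernel{\alpha_a}{w}{B}$ for $B$ in a generating $\pi$-system of $\Wsep$ (the paper compresses your $\lambda$-system $\mathcal D$ and the Dynkin step into the single remark that $\bigcup_n\Wsig_n$ is an intersection-stable generator). The obstacle you single out is genuinely the only delicate point, and you are right to be suspicious rather than to wave it through: with the recursion as literally written, $\Wsig_n$ is generated only by the evaluations $\kernel{\alpha_a}{\fdot}{B}$, $B\in\Wsig_{n-1}$, so nothing forces $\Wsig_{n-1}\subseteq\Wsig_n$, hence $\bigcup_n\Wsig_n$ need not be intersection-stable; and nesting after the fact, i.e.\ passing to $\sigma(\Wsig_0\cup\dots\cup\Wsig_n)$, does not obviously keep you inside $\mathcal D$, precisely because $\mathcal D$ is only a $\lambda$-system and membership is not inherited by generated \sigalgs. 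The clean repair --- surely what is intended, and it changes neither direction of the argument --- is to make the recursion cumulative, $\Wsig_n\defeq\sigma\bigl(\Wsig_{n-1}\cup\sigma(\kernel{\alpha_a}{\fdot}{B},\;a\in\A,\,B\in\Wsig_{n-1})\bigr)$: then $\bigcup_n\Wsig_n$ is an increasing union of \sigalgs, hence an algebra and in particular a $\pi$-system contained in $\mathcal D$, and your Dynkin argument closes.
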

\begin{enproof}
	\item[``$\subseteq$'':] Clearly, any \sigalg\ $\Wsig$ from the set on the right-hand side has to contain
		$\Wsig_0$ because $\beta$ is $\Wsig$\nbd measurable. Further, if it contains $\Wsig_{n-1}$, it
		also has to contain $\Wsig_n$, because $\alpha_a$ must be measurable for all $a\in\A$.
		Thus it contains $\Wsep$.
	\item[``$\supseteq$'':] We have to show that $\Wsep$ is separately measurable. $\beta$ is measurable,
		because $\Wsep \supseteq \Wsig_0=\sigma(\beta)$. $\bigcup_n \Wsig_n$ is an intersection stable
		generator of $\Wsep$, thus for measurability of $\alpha_a$, it is sufficient that
		$\kernel{\alpha_a}{\fdot}{B}$ is $\Wsep$\nbd measurable for $B\in\bigcup_n\Wsig_n$.
		But by definition of $\Wsig_n$, $\kernel{\alpha_a}{\fdot}{B}$ is $\Wsig_n$\nbd measurable for
		$B\in\Wsig_{n-1}$.
\end{enproof}

Note that there is no reason why $\alpha$ should be jointly measurable when we equip $\W$ with $\Wsep$. When we
are working with a separately but not jointly measurable \sigalg, we are rather working with a family
$(\alpha_a)_{a\in\A}$ of kernels than with a single kernel $\alpha$. We do not know if a minimal jointly
measurable \sigalg\ exists in general. The above construction does not work well for $\alpha$ instead of
$\alpha_a$, because we want a product \sigalg\ on $\A\times\W$ and taking products is not compatible with
intersections in the sense that $\Asig\otimes(\Wsig\cap\Wsig') \subsetneqq (\Asig\otimes\Wsig)\cap
(\Asig\otimes\Wsig')$ in general.
Of course, every jointly measurable \sigalg\ is separately measurable and thus has to contain $\Wsep$. Also note
that $\Wsep$ need not be countably generated, which might cause technical problems when working with $\Wsep$.
Next we show that in the ``nice case'' where $\Wsep$ is countably generated, $\alpha$ is jointly measurable.

\begin{proposition}\proplabel{jointmeasurable}
	If\/ $\Wsep$ is countably generated, then it is jointly measurable, and in particular the unique minimal
	jointly measurable \sigalg.
\end{proposition}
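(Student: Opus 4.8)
The plan is to verify directly the two conditions defining joint measurability for $\Wsep$. That $\beta$ stays measurable when $\W$ carries $\Wsep$ is immediate, since $\Wsep\supseteq\Wsig_0=\sigma(\beta)$. So the point is to show that $\alpha$ is measurable as a map $(\A\times\W,\,\B(\A)\otimes\Wsep)\to\P(\W,\Wsep)$. Because the $\sigma$-algebra on $\P(\W,\Wsep)$ is generated by the evaluations $\mu\mapsto\mu(B)$, $B\in\Wsep$, this amounts to showing that for each $B\in\Wsep$ the function
\[
   f_B\colon\A\times\W\to{\Bbb R},\qquad f_B(a,w)=\alpha(a,w;B),
\]
is $\B(\A)\otimes\Wsep$-measurable. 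Two facts about $f_B$ are already available: it is $\B(\A)\otimes\B(\W)$-measurable, since $\alpha$ is Borel jointly measurable by the general assumption; and, for each fixed $a$, the section $f_B(a,\cdot)$ is $\Wsep$-measurable, since separate measurability of $\Wsep$ (the preceding lemma) means exactly that $\alpha_a\colon(\W,\Wsep)\to\P(\W,\Wsep)$ is measurable. Thus everything follows once one has the general lemma: \emph{if $\W$ is Souslin, $\Wsig\subseteq\B(\W)$ is countably generated, and $f\colon\A\times\W\to{\Bbb R}$ is $\B(\A)\otimes\B(\W)$-measurable with every section $f(a,\cdot)$ being $\Wsig$-measurable, then $f$ is $\B(\A)\otimes\Wsig$-measurable.}

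I would prove this lemma as follows. Write $\Wsig=\sigma(p)$ for a Borel map $p\colon\W\to\{0,1\}^{\Bbb N}$ --- for instance $p=(\mathbf 1_{E_k})_k$ for a countable generator $(E_k)_k$ of $\Wsig$ --- and set $q:=\mathrm{id}_\A\times p\colon\A\times\W\to\A\times\{0,1\}^{\Bbb N}$. Since each section $f(a,\cdot)$ is $\sigma(p)$-measurable, it is constant on the fibres of $p$; hence for every Borel $U\subseteq{\Bbb R}$ the set $f^{-1}(U)$ is a union of fibres of $q$, i.e.\ $q^{-1}(q(f^{-1}(U)))=f^{-1}(U)$, and likewise for $f^{-1}({\Bbb R}\setminus U)$. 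By the properties of Souslin spaces collected in \apref{quotient} one has $\B(\A)\otimes\B(\W)=\B(\A\times\W)$, so $f^{-1}(U)$ and $f^{-1}({\Bbb R}\setminus U)$ are Borel subsets of the Souslin space $\A\times\W$, and consequently their $q$-images are Souslin subsets of $\A\times\{0,1\}^{\Bbb N}$; moreover these two images are disjoint, precisely because $f(a,\cdot)$ is constant on the $p$-fibres in its second argument. By the Lusin separation theorem there is a Borel set $M\subseteq\A\times\{0,1\}^{\Bbb N}$ with $q(f^{-1}(U))\subseteq M$ and $M\cap q(f^{-1}({\Bbb R}\setminus U))=\emptyset$. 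Passing to $q$-preimages gives $f^{-1}(U)\subseteq q^{-1}(M)$ and $q^{-1}(M)\cap f^{-1}({\Bbb R}\setminus U)=\emptyset$, hence $q^{-1}(M)=f^{-1}(U)$. Since $q^{-1}(\B(\A)\otimes\B(\{0,1\}^{\Bbb N}))=\B(\A)\otimes\sigma(p)=\B(\A)\otimes\Wsig$, this exhibits $f^{-1}(U)\in\B(\A)\otimes\Wsig$ for every Borel $U$, i.e.\ $f$ is $\B(\A)\otimes\Wsig$-measurable. Applying the lemma to each $f_B$, $B\in\Wsep$, then shows $\alpha$ is $\B(\A)\otimes\Wsep$-measurable, so $\Wsep$ is jointly measurable.

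Minimality is then immediate: every jointly measurable $\sigma$-algebra is separately measurable and therefore contains $\Wsep$ (by minimality of $\Wsep$ among separately measurable $\sigma$-algebras, established in the preceding lemma), while $\Wsep$ itself is jointly measurable, so it is the least, and in particular the unique minimal, jointly measurable $\sigma$-algebra. The substance of the argument is the lemma, and within it the one genuinely non-formal step is the passage from $\B(\A)\otimes\B(\W)$-measurability to $\B(\A)\otimes\Wsig$-measurability; this fails without topological hypotheses, and it is the Souslin assumption that makes the Lusin separation theorem and the identity $\B(\A)\otimes\B(\W)=\B(\A\times\W)$ available, whereas countable generation of $\Wsep$ is exactly what permits encoding $\Wsep$ through a single Borel map $p$ into a Polish space so that these tools can be brought to bear --- which is why no analogous statement is offered when $\Wsep$ is not countably generated.
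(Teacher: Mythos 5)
Your proof is correct, and its overall architecture coincides with the paper's: reduce joint measurability to showing that each $f_B(a,w)=\alpha(a,w;B)$ is $\B(\A)\otimes\Wsep$-measurable, and derive this from the two available facts that $f_B$ is Borel jointly measurable and that each section $f_B(a,\cdot)$ is $\Wsep$-measurable, with countable generation and the Souslin hypothesis supplying the bridge. The difference lies in how that bridge is built. The paper cites the corollary of Blackwell's theorem from its appendix: a Borel function on a Souslin space is measurable with respect to a countably generated sub-$\sigma$-algebra iff it is constant on its atoms, applied after identifying the atoms of $\B(\A)\otimes\Wsep$ as $\{a\}\times F$. You instead re-derive the needed instance of this from first principles: encode $\Wsep$ by a single Borel map $p$ into $\{0,1\}^{\Bbb N}$, observe that $f_B^{-1}(U)$ is saturated under $q=\mathrm{id}_\A\times p$, and separate the disjoint Souslin images $q(f_B^{-1}(U))$ and $q(f_B^{-1}({\Bbb R}\setminus U))$ by a Borel set via the Lusin separation theorem. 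Since Blackwell's theorem is itself usually proved by exactly this separation argument, you have essentially inlined a self-contained proof of the appendix's corollary in the product setting; this makes your write-up longer but independent of the appendix, and it usefully makes explicit where countable generation (existence of the single encoding map $p$) and the Souslin hypothesis (analyticity of Borel images, $\B(\A\times\W)=\B(\A)\otimes\B(\W)$, Lusin separation) each enter. Your treatment of the trivial parts ($\beta$-measurability and minimality among jointly measurable $\sigma$-algebras) matches the paper's remarks.
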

\begin{proof}
	Let $\Asig\defeq\B(\A)$. We have to show that $f\colon \A\times\W\to[0,1]$,
	$(a, w)\mapsto\kernel\alpha{a, w}{B}$ is $(\Asig\otimes\Wsep)$\nbd measurable for arbitrary choice of
	$B\in\Wsep$. Because $\Wsep$ is countably generated, $\Asig\otimes\Wsep$ is a countably generated
	sub-\sigalg\ of the Borel \sigalg\ of the Souslin space $\A\times\W$. It follows from Blackwell's
	theorem (see \apref{quotient}) and the fact that $f$ is Borel measurable that $f$ is
	$\Asig\otimes\Wsep$ measurable if and only if it is constant on the atoms of $\Asig\otimes\Wsep$.
	The atoms are obviously of the form $\sset{a}\times F$, where $a\in\A$ and $F$ is an atom of $\Wsep$.
	Because $\alpha_a$ is measurable w.r.t.\ $\Wsep$, $f(a, \fdot)$ is constant on the atom $F$.
	Thus, $f$ is constant on $\sset{a}\times F$ and therefore jointly measurable.
\end{proof}

A simple sufficient condition for $\Wsep$ to be countably generated is that there are only countably many
possible actuator states, i.e.\ $\A$ is countable.

\begin{corollary}\corlabel{acount}
	Let\/ $\A$ be countable. Then\/ $\Wsep$ is countably generated and jointly measurable.
\end{corollary}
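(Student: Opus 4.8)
The plan is to obtain the Corollary from the previous Proposition: as soon as we know that $\Wsep$ is countably generated, that Proposition supplies joint measurability (and minimality) at no extra cost. So the whole task is to show that countability of $\A$ forces $\Wsep$ to be countably generated. For this I would prove, by induction on $n$, that each $\Wsig_n$ in the recursive construction of the Lemma above is countably generated; granting this, $\Wsep=\sigma\(\bigcup_n\Wsig_n\)$ is generated by the union of countable generators of the individual $\Wsig_n$, which is again a countable set, so $\Wsep$ is countably generated.

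The one ingredient that is not pure bookkeeping is the following stability property, which I would isolate as an auxiliary lemma. $(\star)$ \emph{Let $(Y,\mathcal Y)$ be a measurable space with $\mathcal Y$ countably generated, and let $\kappa$ be a Markov kernel from $\W$ into the probability measures on $(Y,\mathcal Y)$. Then $\sigma\(\kernel{\kappa}{\fdot}{B}\colon B\in\mathcal Y\)$ is a countably generated $\sigma$-algebra on $\W$.} To prove $(\star)$, take a countable generator of $\mathcal Y$ and pass to the algebra $\mathcal E$ it generates: $\mathcal E$ is still countable, is $\cap$-stable, and generates $\mathcal Y$. Put $\mathcal D_0\defeq\sigma\(\kernel{\kappa}{\fdot}{B}\colon B\in\mathcal E\)$, which is countably generated by construction. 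The family $\bset{B\in\mathcal Y}{\text{$\kernel{\kappa}{\fdot}{B}$ is $\mathcal D_0$-measurable}}$ contains $Y$ and $\mathcal E$, and — since $B\mapsto\kappa(w;B)$ is a probability measure for every $w$ — is closed under proper set-differences ($\kernel{\kappa}{\fdot}{B\setminus B'}=\kernel{\kappa}{\fdot}{B}-\kernel{\kappa}{\fdot}{B'}$ whenever $B'\subseteq B$) and under increasing unions (continuity of measures from below, together with measurability of pointwise limits). Hence it is a Dynkin system, and by the $\pi$-$\lambda$ theorem it equals $\mathcal Y$; therefore $\sigma\(\kernel{\kappa}{\fdot}{B}\colon B\in\mathcal Y\)\subseteq\mathcal D_0$, the reverse inclusion being trivial, and this $\sigma$-algebra is countably generated. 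I regard this step as routine and do not expect a genuine obstacle; the only subtlety is the bookkeeping that the generators can be chosen $\cap$-stable so that the $\pi$-$\lambda$ theorem applies.

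With $(\star)$ in hand the induction is short. For $n=0$: $\S$ is a Souslin space, so $\B(\S)$ is countably generated (a standard property of Souslin spaces), and $(\star)$ applied with $\kappa=\beta$ and $(Y,\mathcal Y)=(\S,\B(\S))$ shows that $\Wsig_0=\sigma(\beta)$ is countably generated. For the inductive step, assume $\Wsig_{n-1}$ is countably generated; then $(\star)$ applied with $\kappa=\alpha_a$ and $(Y,\mathcal Y)=(\W,\Wsig_{n-1})$ shows that, for each fixed $a\in\A$, the $\sigma$-algebra $\sigma\(\kernel{\alpha_a}{\fdot}{B}\colon B\in\Wsig_{n-1}\)$ is countably generated, and since $\A$ is countable, $\Wsig_n=\sigma\(\kernel{\alpha_a}{\fdot}{B}\colon a\in\A,\,B\in\Wsig_{n-1}\)$ is generated by a countable union of these countable generators and is therefore countably generated. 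This is the only point where countability of $\A$ is used. By induction every $\Wsig_n$, and hence $\Wsep$, is countably generated, and the previous Proposition then gives that $\Wsep$ is jointly measurable — in fact the unique minimal jointly measurable $\sigma$-algebra. In short, the sole piece of real work is $(\star)$; everything else amounts to the observation that a countable union of countably generated $\sigma$-algebras generates a countably generated $\sigma$-algebra, which is exactly what countability of $\A$ delivers.
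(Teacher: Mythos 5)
Your proof is correct and takes essentially the same route as the paper: induction on $n$ showing each $\Wsig_n$ is countably generated (with countability of $\A$ used exactly where you use it, to pass to the countable union), followed by an appeal to the preceding proposition for joint measurability. The only difference is that you spell out, via the $\pi$-$\lambda$ theorem, the stability property $(\star)$ that the paper's proof asserts without comment; that elaboration is accurate.
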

\begin{proof}
	Because $\B(\S)$ is countably generated, $\Wsig_0$ is countably generated. If $\Wsig_{n-1}$ is countably
	generated, the same holds for $\sigma\(\alpha_a\colon\W\to\P(\W, \Wsig_{n-1})\)$ for any $a\in\A$.
	Because $\A$ is countable, $\Wsig_n$ is generated by a countable union of countably generated \sigalgs,
	thus it is countably generated, and the same holds for $\Wsep$.
\end{proof}

\subsection[Almost jointly measurable \sigalg\ in the memoryless case]
	{A countably generated, almost jointly measurable \sigalg\ in the memoryless case}\seclabel{almost}

In this section, \emph{we assume that the agent is memoryless, i.e.\ $C_n$ is conditionally independent of
$C_{n-1}$ given $S_n$}. Then we can concatenate the kernels from $\W$ to $\S$, from $\S$ to $\C$, and from $\C$
to $\A$ to obtain a new kernel $\gamma$ from $\W$ to $\A$. We then have the following situation, where the $\C$
and $\S$ components are marginalised (integrated) out.
\renewcommand{\senkern}{{\textstyle\gamma}}
\[ \xymatrix{
	\ar[r]& W_{n-1}\ar[rr]\ar[dr]^\senkern && W_n \ar[rr]\ar[dr]^\senkern&& W_{n+1}\ar[r]\ar[dr]^\senkern & \\
	\ar[ur]^\actkern&& A_{n-1}\ar[ur]^\actkern && A_n\ar[ur]^\actkern && A_{n+1}
} \]
Note that if $\beta$ is $\Wsig$-measurable, the same holds for $\gamma$, but the converse need not be true. We
introduce yet another kernel $\kappa$ which is the combination of $\gamma$ and $\alpha$, i.e.\ $\kappa\colon
\W\to\P(\A\times\W)$, $\kappa(w)\=\gamma(w) \otimes \alpha(\fdot, w)$. The reason to do this is that while
kernels mapping \emph{from} a product space complicate finding minimal \sigalgs\ (with product structure), this
is not the case for kernels mapping \emph{into} a product space. The variables $W_n, A_n$, $n \in {\Bbb N}$, factorise
also according to the following graphical model.
\renewcommand{\senkern}{{\raisebox{0.7ex}{$\kappa$}}}
\[ \xymatrix{
	{\phantom{W_n}} \ar[r]\ar[dr]& W_{n-1}\ar[r]\ar[dr]^\senkern & W_n \ar[r]\ar[dr]^\senkern& W_{n+1}\ar[r]\ar[dr]^\senkern & \\
	& A_{n-2}\ar@{-}[u] & A_{n-1}\ar@{-}[u] & A_n\ar@{-}[u]  &
}\]
We can define the minimal \sigalg\ $\Wam$ such that $\beta$ and $\kappa$ are measurable in the same way as
we defined $\Wsep$.

\begin{lemma}\lemlabel{am}
	Let\/ $\Wsig'_0\defeq\sigma(\beta)$ and
		\[ \Wsig'_n \defeq \sigma\( \kappa \colon \W\to\P(\A\times\W, \Asig\otimes\Wsig'_{n-1})\)
			\= \sigma\(\kernel\kappa\fdot{B},\; B\in\Asig\otimes\Wsig'_{n-1}\).\]
	Then\/ $\Wam\defeq \sigma\(\bigcup_{n\in {\Bbb N}} \Wsig'_n\)$ is the unique minimal \sigalg\ on\/ $\W$ s.t.\
	$\beta$ and\/ $\kappa$ are measurable. Furthermore, $\Wam$ is countably generated.
\end{lemma}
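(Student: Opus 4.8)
The plan is to obtain the minimality of $\Wam$ exactly as the minimal separately measurable $\Wsep$ was obtained, and to prove countable generation by a separate induction on $n$. The one structural difference from the construction of $\Wsep$ --- and the reason countable generation now comes for free --- is that $\kappa$ maps \emph{into} the product space $\A\times\W$, so that at every recursion step the sets $B$ range over the \emph{single} \sigalg\ $\Asig\otimes\Wsig'_{n-1}$ rather than over an $\A$-indexed family of kernels, as happens for $\alpha_a$ in the definition of $\Wsep$.

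For minimality I would copy the two inclusions from the construction of $\Wsep$. ``$\subseteq$'': any \sigalg\ $\Wsig\subseteq\B(\W)$ for which $\beta$ is measurable contains $\Wsig'_0=\sigma(\beta)$; and if it contains $\Wsig'_{n-1}$, then, $\kappa$ being measurable as a map into $\P(\A\times\W,\Asig\otimes\Wsig)$ and $\Asig\otimes\Wsig\supseteq\Asig\otimes\Wsig'_{n-1}$, it also contains the generators $\kernel\kappa\fdot{B}$, $B\in\Asig\otimes\Wsig'_{n-1}$, of $\Wsig'_n$; so $\Wsig\supseteq\Wam$ by induction. ``$\supseteq$'': one checks that $\beta$ and $\kappa$ are measurable when $\W$ carries $\Wam$. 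For $\beta$ this is immediate from $\Wam\supseteq\Wsig'_0$. For $\kappa$, the collection of $B\in\Asig\otimes\Wam$ with $\kernel\kappa\fdot{B}$ being $\Wam$-measurable is a Dynkin system --- it contains $\A\times\W$, is closed under proper differences because $\kernel\kappa\fdot{B_2\setminus B_1}=\kernel\kappa\fdot{B_2}-\kernel\kappa\fdot{B_1}$, and under increasing unions by continuity from below of the measures $\kappa(w)$ --- and it contains the rectangles $A\times B$ with $A\in\Asig$ and $B\in\bigcup_n\Wsig'_n$, an intersection-stable generator of $\Asig\otimes\Wam$ exactly as in the construction of $\Wsep$, since $\kernel\kappa\fdot{A\times B}$ is $\Wsig'_n$-measurable for $B\in\Wsig'_{n-1}$ by construction; hence the Dynkin system is all of $\Asig\otimes\Wam$. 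This is exactly the bookkeeping used for $\Wsep$.

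For the new claim I would show by induction that every $\Wsig'_n$ is countably generated, from which $\Wam=\sigma\bigl(\bigcup_n\Wsig'_n\bigr)$ is countably generated as well (it is generated by a countable union of countable generators). Base case: since $\S$ is Souslin, $\B(\S)$ is countably generated (see \apref{quotient}); fixing a countable intersection-stable generator $\mathcal{E}$ of $\B(\S)$, a Dynkin-system argument applied to the set $B$ appearing in $\kernel\beta\fdot{B}$ gives $\Wsig'_0=\sigma(\beta)=\sigma\bigl(\kernel\beta\fdot{E}\colon E\in\mathcal{E}\bigr)$, which is generated by countably many $[0,1]$-valued maps and is therefore countably generated. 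Inductive step: assume $\Wsig'_{n-1}$ is countably generated. Since $\Asig=\B(\A)$ is also countably generated ($\A$ being Souslin), the product \sigalg\ $\Asig\otimes\Wsig'_{n-1}$ is countably generated --- it is generated by the countably many rectangles $C\times\W$ and $\A\times C'$, where $C$, $C'$ run through countable generators of $\Asig$ and $\Wsig'_{n-1}$. Picking a countable intersection-stable generator $\mathcal{H}$ of $\Asig\otimes\Wsig'_{n-1}$ and applying the same Dynkin-system argument to the set $B$ in $\kernel\kappa\fdot{B}$ yields $\Wsig'_n=\sigma\bigl(\kernel\kappa\fdot{B}\colon B\in\Asig\otimes\Wsig'_{n-1}\bigr)=\sigma\bigl(\kernel\kappa\fdot{H}\colon H\in\mathcal{H}\bigr)$, which is again countably generated.

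Everything here is routine: a few applications of the Dynkin-system principle, and the standard fact that a product of two countably generated \sigalgs\ is countably generated. The single load-bearing point --- and what makes $\Wam$ countably generated while $\Wsep$ need not be --- is that at each step $\Wsig'_n$ is built from the countably generated \sigalg\ $\Asig\otimes\Wsig'_{n-1}$, and the passage from $\sigma\bigl(\kernel\kappa\fdot{B}\colon B\in\Asig\otimes\Wsig'_{n-1}\bigr)$ to $\sigma\bigl(\kernel\kappa\fdot{H}\colon H\in\mathcal{H}\bigr)$ for a countable intersection-stable generator $\mathcal{H}$ is precisely the mechanism that propagates countable generation from $\Wsig'_{n-1}$ to $\Wsig'_n$; I expect this to be the only step needing genuine care.
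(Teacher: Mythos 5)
Your proposal is correct and follows essentially the same route as the paper, whose proof is literally ``analogous to the proofs of Lemma~\lemref{sep} and Corollary~\corref{acount}'': the two-inclusion minimality argument mirrors Lemma~\lemref{sep}, and the induction on countable generation mirrors Corollary~\corref{acount}, with the key replacement (which you correctly single out) that $\Asig\otimes\Wsig'_{n-1}$ is countably generated whenever $\Wsig'_{n-1}$ is, so no countability assumption on $\A$ is needed. The only cosmetic caveat, shared with the paper's own proof of Lemma~\lemref{sep}, is that for $\bigcup_n\Wsig'_n$ (or the corresponding rectangles) to be intersection-stable one should arrange the family $\Wsig'_n$ to be increasing, e.g.\ by including $\Wsig'_{n-1}$ among the generators of $\Wsig'_n$, which does not change $\Wam$.
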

\begin{proof}
	Analogous to the proof of \lemref{sep} and \corref{acount}.
\end{proof}

In the following, consider $\kappa$ as kernel from $(\W, \Wam)$ to $(\A\times \W, \Asig\otimes\Wam)$. Note that
because $\Wam$ is countably generated, the quotient space obtained by identifying atoms of $\Wam$ to points is
again a Souslin space\footnote{more precisely there exists a Souslin topology such that the Borel \sigalg\ coincides
with the final \sigalg\ induced by the canonical projection from $(\W, \Wam)$ onto the quotient} (see
\apref{quotient}). Thus it is technically nice and, in particular, we can factorise $\kappa$ into $\gamma$ and
some kernel $\alpha'$ from $\A\times\W$ to $\W$ by choosing regular versions of conditional probability. Then
$\alpha'$ is jointly measurable. The draw-back is that $\alpha'(\fdot, w)$ is only defined $\gamma(w)$\nbd almost
surely and we cannot guarantee that $\alpha'=\alpha$ is a valid choice, i.e.\ that $\alpha$ is
$\Asig\otimes\Wam$\nbd measurable. We easily get the following.

\begin{lemma}
	$\Wam$ is the unique minimal \sigalg\ on\/ $\W$ that satisfies the following condition. $\beta$ is
	measurable and there exists a (jointly measurable) kernel\/ $\alpha'$ from\/ $\A\times\W$ to\/ $\W$,
	s.t., for every measure\/ $\mu\in\P\(\W, \B(\W)\)$, $\alpha\=\alpha'$ $(\mu\otimes\gamma)$\nbd almost
	surely.
\end{lemma}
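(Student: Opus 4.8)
The plan is to prove separately that $\Wam$ satisfies the stated condition and that every \sigalg\ satisfying it contains $\Wam$; a smallest element of a family ordered by inclusion is automatically unique, so this gives the claim.

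For the first part I would take $\alpha'$ to be the kernel constructed in the paragraph preceding the statement. Since $\Wam$ is countably generated by \lemref{am}, the quotient of $(\W,\Wam)$ by its atoms is Souslin, so $\kappa$, viewed as a measurable kernel from $(\W,\Wam)$ to $(\A\times\W,\Asig\otimes\Wam)$ with $\A$-marginal $\gamma$, disintegrates: there is a jointly measurable kernel $\alpha'$ from $\A\times\W$ to $\W$ with $\kappa(w)=\gamma(w)\otimes\alpha'(\fdot,w)$ for every $w$. Comparing with the defining identity $\kappa(w)=\gamma(w)\otimes\alpha(\fdot,w)$ gives, for each fixed $w$ and each $C\in\Wam$, that $\int_B\alpha(a,w)(C)\,\gamma(w)(da)=\int_B\alpha'(a,w)(C)\,\gamma(w)(da)$ for all $B\in\Asig$, hence $\alpha(a,w)(C)=\alpha'(a,w)(C)$ for $\gamma(w)$-almost all $a$. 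Running $C$ through a countable intersection stable generator of $\Wam$ and taking the (countable) union of the corresponding $\gamma(w)$-null sets produces, for each $w$, a single $\gamma(w)$-null set off which $\alpha(a,w)$ and $\alpha'(a,w)$ agree on all of $\Wam$. Integrating this against any $\mu\in\P(\W,\B(\W))$ yields $\alpha=\alpha'$ $(\mu\otimes\gamma)$-almost surely; together with the $\Wam$-measurability of $\beta$ (\lemref{am}), $\Wam$ satisfies the condition.

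For minimality, let $\Wsig$ be any \sigalg\ with $\beta$ measurable admitting a jointly measurable kernel $\alpha'$ from $\A\times\W$ to $\W$ with $\alpha=\alpha'$ $(\mu\otimes\gamma)$-almost surely for every $\mu$. I would show $\Wsig'_n\subseteq\Wsig$ for all $n$ by induction on the \sigalgs\ of \lemref{am}; the base case $\Wsig'_0=\sigma(\beta)\subseteq\Wsig$ is immediate. Assume $\Wsig'_{n-1}\subseteq\Wsig$. Since the rectangles $B_\A\times B_\W$ with $B_\A\in\Asig$, $B_\W\in\Wsig'_{n-1}$ form an intersection stable generator of $\Asig\otimes\Wsig'_{n-1}$, and the class of $B$ with $w\mapsto\kappa(w)(B)$ $\Wsig$-measurable is a Dynkin system, it suffices to prove $w\mapsto\kappa(w)(B_\A\times B_\W)$ is $\Wsig$-measurable for such rectangles. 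Here the key move is to feed a Dirac measure $\mu=\delta_w$ into the hypothesis: for the fixed set $B_\W\in\Wsig'_{n-1}\subseteq\Wsig$ this forces $\alpha(a,w)(B_\W)=\alpha'(a,w)(B_\W)$ for $\gamma(w)$-almost all $a$, for every $w$ separately. Hence $\kappa(w)(B_\A\times B_\W)=\int_{B_\A}\alpha(a,w)(B_\W)\,\gamma(w)(da)=\int_{B_\A}\alpha'(a,w)(B_\W)\,\gamma(w)(da)$ for every $w$, and the last expression is $\Wsig$-measurable in $w$ because $\gamma$ is $\Wsig$-measurable (since $\beta$ is) and $(a,w)\mapsto\alpha'(a,w)(B_\W)$ is $\Asig\otimes\Wsig$-measurable. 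This closes the induction, so $\Wam=\sigma(\bigcup_n\Wsig'_n)\subseteq\Wsig$.

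I expect the main obstacle to be the careful handling of the almost-everywhere clauses: the hypothesis only equates $\alpha$ and $\alpha'$ up to a $(\mu\otimes\gamma)$-null set, and the minimality argument hinges on the observation that taking $\mu$ to be a point mass localizes this to an everywhere-in-$w$ identity between the integrals defining $\kappa(w)(B_\A\times B_\W)$ — which is exactly what lets the measurability argument go through for an arbitrary, not necessarily countably generated, target \sigalg\ $\Wsig$. The only other delicate step is the disintegration of $\kappa$ in the first part, which genuinely requires $\Wam$ to be countably generated (so the quotient is Souslin and regular conditional probabilities exist), as supplied by \lemref{am}.
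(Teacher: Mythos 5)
Your proposal is correct and follows essentially the same route as the paper: the existence half is the disintegration of $\kappa$ over the countably generated $\Wam$ from the preceding discussion, and the minimality half rests on the same key observation the paper makes parenthetically — taking $\mu$ to be a Dirac measure turns the almost-sure identity into the pointwise factorisation $\kappa(w)=\gamma(w)\otimes\alpha'(\fdot,w)$, whence $\kappa$ is measurable and the minimality characterisation of \lemref{am} applies. Your version merely unpacks the appeal to \lemref{am} into the explicit induction over the $\Wsig'_n$ and spells out the null-set bookkeeping, which the paper leaves implicit.
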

\begin{proof}
	The above discussion shows that $\Wam$ satisfies the condition (note that we can w.l.o.g.\ assume that
	$\mu$ is a Dirac measure). If, on the other hand, $\alpha'$ as above exists, then $\kappa$ equals the
	composition of $\gamma$ and $\alpha'$. In particular, $\kappa$ is measurable and \lemref{am} yields the
	claim.
\end{proof}

The condition $\alpha\=\alpha'$ a.s.\ w.r.t.\ every measure of the form $\mu\otimes\gamma$ means that the
difference between $\alpha$ and $\alpha'$ is not visible regardless of any changes we might impose on the
environment. The situation, however, may change if the agent changes its policy $\pi$, thereby changing
the kernel $\gamma$. Then the difference between $\alpha$ and $\alpha'$ can become important and $\alpha'$ as well
as $\Wam$ would have to be changed.

We trivially have that every jointly measurable \sigalg\ $\Wsig$ on $\W$ must contain $\Wam$. In particular, if
$\Wsep$ is countably generated, $\Wam\subseteq\Wsep$. This is probably not true in general.


\section{The world from an intrinsic perspective}
\subsection{Sensory equivalence} \label{sensequiv}
In what follows we use equivalence relations to coarse grain the world states and apply the 
constructions described in \apref{quotient}. 

Denote by $P_S^{a_{\Bbb N}}(w) \in \P(\S^{\Bbb N})$ the distribution of sensor values when the ``world'' is initially in the state
$W_1=w\in\W$ and the agent performs the sequence $a_{\Bbb N} \in\A^{\Bbb N}$, of actuator states, i.e.\ $A_n=a_n$.
That is, we modify the agent policy $\pi$ in a time-dependent way such that $\pi$ is replaced by ${({\pi}_n)}_{n \in {\Bbb N}}$
and $\pi_n$ ignores the memory (and thus the sensors) and outputs (the Dirac measure in) the value $a_n$. The
sensor and world update kernels $\beta$ and $\alpha$, however, remain unchanged. More explicitly, for
$B=B_1\times\cdots\times B_n\times\S\times\cdots\in\B(\S^{\Bbb N})$,
	\[      P_S^{a_{\Bbb N}}(w_1)(B)  \= \kernint{\cdots\;
		\kernint{\kernel\beta{w_1}{B_1}\cdots\kernel\beta{w_n}{B_n}}{\alpha_{a_{n-1}}}{w_{n-1}}{w_n}
		\;\cdots\,}{\alpha_{a_1}}{w_1}{w_2}. \]
Now we define an equivalence relation $\eqsep$, called \emph{sensory equivalence}, on $\W$ by
	\[ w \eqsep w' \defequiv P_S^{a_{\Bbb N}}(w) \=  P_S^{a_{\Bbb N}}(w') \;\; \forall a_{\Bbb N} \in\A^{\Bbb N} .   \]
More generally, we obtain the \emph{intrinsic \sigalg}
	\[ \Wint \defeq \sigma\Bigl( P_S^{a_{\Bbb N}}  ,\; a_{\Bbb N} \in\A^{\Bbb N}\Bigr), \]
which describes the information about the world that can in principle be obtained by the agent through its
sensors. Obviously, the atoms $\ati{\cdot}$ of the intrinsic \sigalg\ are given precisely by the sensory equivalence, i.e.\
$\ati{w} = \set{w'\in\W}{w'\eqsep w}$ for all $w\in\W$.

It turns out (\propref{atoms} and \exref{sepcoarse}) that the intrinsic \sigalg\ leads to a coarser partitioning
of the world than the extrinsic perspective formalised by $\Wsep$. The reason is that the construction of
$\Wsep$ uses knowledge of the mechanisms of the ``world'', more precisely the world update kernel $\alpha$,
which is required to remain measurable when we replace the Borel \sigalg\ $\B(\W)$ by $\Wsep$.  The necessary
information about $\alpha$ cannot be constructed from the sensor values in general, even if infinitely many
observations are possible and all probabilities can be estimated accurately. The difference between the
intrinsic and extrinsic point of view is precisely that the agent does not know the mechanisms of the world
encoded in $\alpha$.

\begin{proposition}\proplabel{atoms}
	$\Wint\subseteq \Wsep$. In particular
		\[ \ats{w} \subseteqs \ati{w} \= \set{w'\in\W}{w'\eqsep w} \qquad\forall w\in\W . \]
	Furthermore, $\Wint=\Wsep$ if and only if\/ $\Wint$ is separately measurable, i.e. $\alpha_a$ is\/
	$\Wint$-$\Wint$-measurable for every\/ $a\in \A$.
\end{proposition}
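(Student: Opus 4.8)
The plan is to prove the inclusion $\Wint\subseteq\Wsep$ directly from the iterated‑integral representation of $P_S^{a_{\Bbb N}}$, to obtain the statement about atoms as an immediate corollary, and then to deduce the characterisation of equality from the minimality of $\Wsep$ established in \lemref{sep}.

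For the inclusion, I would first reduce measurability of $P_S^{a_{\Bbb N}}\colon\W\to\P(\S^{\Bbb N})$ w.r.t.\ $\Wsep$ to measurability of the real‑valued maps $w\mapsto P_S^{a_{\Bbb N}}(w)(B)$, with $B$ ranging over the measurable rectangles $B_1\times\cdots\times B_n\times\S\times\cdots$. This uses two standard facts: these rectangles form an intersection‑stable generator of $\B(\S^{\Bbb N})$, and whenever a $\pi$‑system generates $\B(\S^{\Bbb N})$, the associated evaluation maps on $\P(\S^{\Bbb N})$ generate its natural $\sigma$‑algebra (a short Dynkin‑system argument). Then, for fixed $a_{\Bbb N}$ and such a rectangle $B$, I would evaluate the iterated integral from the inside out: put $h_n(w)\defeq\kernel\beta w{B_n}$ and, for $1\le j<n$, $h_j(w)\defeq\kernel\beta w{B_j}\cdot\int h_{j+1}(w')\,\alpha_{a_j}(w;dw')$; pulling out of each integral the factors not depending on the integration variable shows $P_S^{a_{\Bbb N}}(\fdot)(B)=h_1$. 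The crucial step is the claim: if $f\colon\W\to[0,1]$ is bounded and $\Wsig_k$‑measurable, then $w\mapsto\int f(w')\,\alpha_a(w;dw')$ is $\Wsig_{k+1}$‑measurable. Granting this, and that multiplication by the $\Wsig_0$‑measurable factor $\kernel\beta\fdot{B_j}$ preserves measurability, a backward induction gives that $h_j$ is $\Wsig_{n-j}$‑measurable; hence $h_1$ is $\Wsig_{n-1}$‑measurable, thus $\Wsep$‑measurable. The claim itself follows from two ingredients already at hand: $\alpha_a\colon\W\to\P(\W,\Wsig_k)$ is $\Wsig_{k+1}$‑measurable by the very definition of $\Wsig_{k+1}$ in \lemref{sep}, and on $\P(\W,\Wsig_k)$ the functional $\mu\mapsto\int f\,d\mu$ is measurable for every bounded $\Wsig_k$‑measurable $f$, by a functional monotone class argument starting from the defining case $f=\mathbf 1_B$, $B\in\Wsig_k$. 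Since $\Wint=\sigma\bigl(P_S^{a_{\Bbb N}}\colon a_{\Bbb N}\in\A^{\Bbb N}\bigr)$, this establishes $\Wint\subseteq\Wsep$.

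The atom statement is then immediate: a coarser $\sigma$‑algebra has larger atoms, so $\ats{w}\subseteqs\ati{w}$ for all $w$, and $\ati{w}=\set{w'\in\W}{w'\eqsep w}$ was observed just before the proposition. For the equivalence, ``$\Wint=\Wsep$ $\Rightarrow$ $\Wint$ separately measurable'' is trivial, since $\Wsep$ is separately measurable by construction. Conversely, assume $\alpha_a$ is $\Wint$‑$\Wint$‑measurable for all $a\in\A$. Note first that $\beta$ is automatically $\Wint$‑measurable: the one‑dimensional marginal $\kernel\beta w{B_1}=P_S^{a_{\Bbb N}}(w)(B_1\times\S\times\cdots)$ factors through $P_S^{a_{\Bbb N}}$ for any choice of $a_{\Bbb N}$. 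Hence $\Wint$ is separately measurable, and $\Wint\subseteq\B(\W)$ by the first part, so the minimality in \lemref{sep} forces $\Wsep\subseteq\Wint$; combined with the inclusion above, $\Wint=\Wsep$.

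The main obstacle is the bookkeeping in the inclusion: one must check carefully that the iterated integral can be evaluated innermost‑first with the index of the filtration $(\Wsig_n)$ dropping by exactly one at each integration, and — importantly — that no countable‑generation assumption on $\Wsep$ is needed, since the entire induction stays inside the concrete $\Wsig_n$. The two supporting measure‑theoretic facts (restriction to a generating $\pi$‑system of rectangles when testing measurability into $\P(\S^{\Bbb N})$, and measurability of $\mu\mapsto\int f\,d\mu$ on a general $\P(\W,\Wsig_k)$) are routine but should be invoked explicitly rather than glossed over.
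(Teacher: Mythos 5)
Your proof is correct and is in essence the paper's own argument: the paper's proof simply asserts that $\Wsep$\nbd measurability of $\beta$ and of the $\alpha_a$ (from \lemref{sep}) yields $\Wsep$\nbd measurability of each $P_S^{a_{\Bbb N}}$, tested on evaluations at cylinder sets, and then obtains the equivalence exactly as you do from the minimality statement of \lemref{sep}; you merely spell out the iterated-integral induction and add the worthwhile observation that $\beta$ is automatically $\Wint$\nbd measurable (needed since the hypothesis only mentions the $\alpha_a$). One small bookkeeping caveat: as literally defined in \lemref{sep} the $\Wsig_n$ are not nested ($\Wsig_0\not\subseteq\Wsig_1$ in general), so after multiplying by the $\Wsig_0$\nbd measurable factor $\kernel\beta\fdot{B_j}$ your $h_j$ is only $\sigma(\Wsig_0\cup\dots\cup\Wsig_{n-j})$\nbd measurable rather than $\Wsig_{n-j}$\nbd measurable, and the induction as indexed does not quite close. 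This costs nothing: either replace the filtration by its nested version, or (cleanest) run your innermost-first induction entirely at the level of $\Wsep$, using that $\beta$ is $\Wsep$\nbd measurable and each $\alpha_a$ is $\Wsep$-$\Wsep$\nbd measurable, which is what the paper implicitly does.
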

\begin{proof}
	$\beta$ and $\alpha_a$ are measurable w.r.t.\ $\Wsep$. Because the \sigalg\ on $\P(\S^{\Bbb N})$ is
	generated by the evaluations, and the cylinder sets form a generator of the \sigalg\ on $\S^{\Bbb N}$,
	measurability of $\beta$ and $\alpha_a$ implies measurability of the function $P_S^{a_{\Bbb N}}$ for
	every $a_{\Bbb N} \in\A^{\Bbb N}$. Hence $\Wint \subseteq \Wsep$.
	This directly implies the corresponding inclusion for the atoms.

	If $\Wint=\Wsep$, $\Wint$ is separately measurable by \lemref{sep}. Conversely, assume that $\Wint$ is
	separately measurable. Then, again by \lemref{sep}, $\Wsep\subseteq \Wint$ and hence $\Wint=\Wsep$.
\end{proof}

Equality in the above proposition does not hold in general, as the following example shows.
\begin{example}\exlabel{sepcoarse}
	Let $\W\defeq\sset{1,\ldots,5}$, $\S=\sset{0, 1}$ and $|\A|=1$, i.e.\ the agent is only observing (a
	state-emitting HMM). Let $\beta(1)\=\beta(4)\=\beta(5)\=\half\dirac0+\half\dirac1$,
	$\beta(2)\=\dirac0$, and $\beta(3)\=\dirac1$.  Further let $\alpha(1)\=\half\dirac2+\half\dirac3$,
	$\alpha(2)\=\alpha(3)\=\alpha(4)\=\dirac4$, $\alpha(5)=\dirac5$. $\alpha$ can be illustrated as
	\newcommand{\optxt}[1]{\text{\rlap{$\quad$\small sensor value #1}}}
	\[\xymatrix{
		&        	& 2\ar[dr] & & & \optxt{$0$}\\
		&1\ar[ur]\ar[dr] &          & 4\ar@(u,r) & 5\ar@(u, r) & \optxt{$0$ or $1$}\\
		&	 	& 3\ar[ur] & && \optxt{$1$}
	}\]
	Then $1\eqsep4\eqsep5\neqsep2, 3$ and $\At(\Wsig_0)=\bsset{\{1, 4, 5\}, \{2\}, \{3\}}$, but
		\[\At(\Wsep)\=\At(\Wsig_1)\=\bsset{\{1\},\, \{2\},\, \{3\},\, \{4, 5\}}. \]
	Thus $1$ and $4$ are identified by $\eqsep$ because they produce identical sequences of sensor values,
	but they are not identified by $\Wsep$ because they have non-identified successors. The definition of
	$\Wsep$ requires that $\alpha$ remains unchanged, while the same sensor values can be produced with the
	partition given by $\eqsep$ by modifying $\alpha$ to $\alpha'$, where $\alpha'(1)\=\alpha'(4)$ is an
	arbitrary convex combination of $\alpha(1)$ and $\alpha(4)$.
\end{example}

\subsection{Sensor-preserving modification of the world}

\Exref{sepcoarse} suggests that one might be able to interpret the coarser partition given by sensory
equivalence as description of the relevant part of the world, provided one is allowed to modify the world update
kernel $\alpha$ in such a way that the distribution of sensor values is preserved.
Intuitively, one just has to choose one of the values $\alpha_a$ takes on a given $\eqsep$\nbd equivalence class.

Of course these selections have to be done in a measurable way, and we need technical restrictions to deal with
this problem. Namely, we assume that the world $\W$ is \emph{compact}, and the sensor kernel $\beta$ as well as
the world update kernels $\alpha_a\colon \W\to\P(\W)$ for every given action $a\in\A$ are \emph{continuous}.
As usual, $\P(\W)$ is equipped with the weak topology induced by bounded continuous functions. Note that
compactness and metrisability of $\W$ also implies compactness and metrisability of $\P(\W)$.

Under these assumptions we can prove that it is possible to modify the world update (and with it the smallest
separately measurable \sigalg\ $\Wsep$) in such a way that the sensor process is preserved and equality holds in
\propref{atoms}. Furthermore, the ``new $\Wsep$'' is countably generated and jointly measurable for the modified
system.

\begin{definition}
	Let\/ $\alpha'\colon \A\times\W\to\P(\W)$ be an ``alternative'' world update kernel. 
	\begin{enumerate}
	\item We call $\alpha'$ \define{equivalent} to $\alpha$ if for every $w\in\W$ and $a_{\Bbb N} \in\A^{\Bbb N}$ the sensor
		process $P_S^{a_{\Bbb N}}(w)$ coincides with the sensor process $P_{S, \alpha'}^{a_{\Bbb N}} (w)$ obtained by replacing $\alpha$ with
		$\alpha'$.
	\item Denote by $\Wsep^{\alpha'}$ the smallest separately measurable \sigalg\ of the system where $\alpha$ is
		replaced by $\alpha'$.
	\end{enumerate}
\end{definition}

\begin{remark}
	$\alpha'$ can be seen as a \emph{model} for the mechanisms of the world, which the agent might use. If
	$\alpha'$ is equivalent to $\alpha$, $\alpha'$ is a perfect model, as far as the agent's
	(possible) observations are concerned. Of course it can still make wrong assumptions about aspects of
	the world that cannot be inferred by the agent. In \propref{compact}, we show (under a continuity
	and compactness assumption) that the agent can always build a perfect model in this sense which is
	consistent with his intrinsic \sigalg.
\end{remark}

\begin{lemma}\lemlabel{PSacont}
	Assume that\/ $\alpha_a$ is continuous for every\/ $a\in\A$ and\/ $\beta$ is continuous.
	Then\/ $P_S^{a_{\Bbb N}}$ is continuous for every $a_{\Bbb N} \in\A^{\Bbb N}$.
\end{lemma}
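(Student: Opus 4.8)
The plan is to prove continuity of $P_S^{a_{\Bbb N}}$ by exhibiting it as a limit of continuous finite-dimensional approximations and invoking the structure of the weak topology on $\P(\S^{\Bbb N})$. Since $\S^{\Bbb N}$ carries the product topology and $\P(\S^{\Bbb N})$ the weak topology, it suffices to check that $w\mapsto P_S^{a_{\Bbb N}}(w)(B)$ is continuous for $B$ ranging over a convergence-determining class, and the natural choice is finite cylinder sets $B=B_1\times\cdots\times B_n\times\S\times\S\times\cdots$ with each $B_i$ open (or, better, to test against bounded continuous functions depending on only finitely many coordinates, which are dense in $C_b(\S^{\Bbb N})$ in the relevant sense). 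So I would reduce to showing that for each fixed $n$ and each bounded continuous $g\colon\S^n\to{\Bbb R}$, the map
	\[ w_1 \;\longmapsto\; \kernint{\cdots\kernint{\int g(s_1,\ldots,s_n)\,\beta(w_1;ds_1)\cdots\beta(w_n;ds_n)}{\alpha_{a_{n-1}}}{w_{n-1}}{w_n}\cdots}{\alpha_{a_1}}{w_1}{w_2} \]
is continuous.

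The core of the argument is then an induction on $n$, built on the elementary fact that composing continuous kernels with continuous bounded test functions preserves continuity: if $h\colon\W\to{\Bbb R}$ is bounded continuous and $\alpha_a\colon\W\to\P(\W)$ is continuous, then $w\mapsto\int h(w')\,\alpha_a(w;dw')$ is bounded continuous, by the very definition of the weak topology on $\P(\W)$. First I would handle the base case $n=1$: $w\mapsto\int g(s)\,\beta(w;ds)$ is continuous because $\beta$ is continuous and $g$ is bounded continuous on $\S$. For the inductive step, suppose the map
	\[ (w_2)\;\longmapsto\; h_{n-1}(w_2)\defeq\kernint{\cdots\int g\,\beta(w_2;ds_2)\cdots}{\alpha_{a_2}}{w_2}{w_3}\cdots \]
obtained from the ``inner'' $n-1$ integrations is bounded and continuous in $w_2$ (this is the induction hypothesis applied to the shifted action sequence and to the test function $(s_2,\ldots,s_n)\mapsto g$, after first integrating against $\beta(w_1;ds_1)$ — one has to be slightly careful to separate the $s_1$-integration, which only contributes a bounded-continuous factor $\int g(s_1,\cdot)\beta(w_1;ds_1)$ that can be pulled out and treated at the end). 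Then $w_1\mapsto\int h_{n-1}(w_2)\,\alpha_{a_1}(w_1;dw_2)$ is continuous since $\alpha_{a_1}$ is continuous and $h_{n-1}$ is bounded continuous, and one finishes by multiplying by the bounded-continuous factor coming from the $\beta(w_1;\cdot)$ integration (for product test functions $g(s_1,\ldots,s_n)=g_1(s_1)\cdots g_n(s_n)$ this factorisation is transparent, and products of such functions are dense).

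Finally I would assemble these pieces: a sequence $w^{(k)}\to w$ in $\W$ gives $P_S^{a_{\Bbb N}}(w^{(k)})(B)\to P_S^{a_{\Bbb N}}(w)(B)$ for every finite-dimensional cylinder $B$ with $\mu_{\partial}$-null boundary (or, via the test-function formulation, $\int f\,dP_S^{a_{\Bbb N}}(w^{(k)})\to\int f\,dP_S^{a_{\Bbb N}}(w)$ for all $f\in C_b(\S^{\Bbb N})$ depending on finitely many coordinates), and since such cylinders generate the Borel $\sigma$-algebra of $\S^{\Bbb N}$ and form a convergence-determining class, this yields $P_S^{a_{\Bbb N}}(w^{(k)})\to P_S^{a_{\Bbb N}}(w)$ weakly, i.e. continuity of $P_S^{a_{\Bbb N}}$. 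The only mild subtlety — the part I expect to need the most care — is the bookkeeping in the nested integral: making precise that the iterated integration really is a composition of continuous-kernel applications interleaved with bounded-continuous multiplications, so that the induction hypothesis applies cleanly at each stage; reducing to product test functions $g_1\otimes\cdots\otimes g_n$ at the outset makes this bookkeeping essentially mechanical. No compactness of $\W$ is needed for this lemma — only the stated continuity hypotheses — which is consistent with compactness being reserved for the measurable-selection argument in \propref{compact}.
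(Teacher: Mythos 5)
Your argument is correct and is exactly the ``easy to see directly'' route that the paper's one-line proof alludes to (the paper simply states the lemma is easy to verify directly or follows from a theorem of Karr, without writing out the induction): reduce to bounded continuous cylinder test functions, which are convergence-determining on $\P(\S^{\Bbb N})$, and induct using the fact that integrating a bounded continuous function against a continuous kernel yields a bounded continuous function. Your handling of the bookkeeping via product test functions $g_1\otimes\cdots\otimes g_n$ is a clean way to make the factorisation of the nested integral transparent, and your observation that compactness of $\W$ is not needed here is accurate.
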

\begin{proof}
	 Easy to see directly or a special case of \cite[Thm.~1]{Karr}.
\end{proof}

\begin{proposition}\proplabel{compact}
	Let\/ $\W$ be compact, $\beta$ and\/ $\alpha_a$ continuous for every\/ $a\in \A$. Then there is a
	kernel\/ $\alpha'\colon\A\times\W\to\P(\W)$, such that\/ $\alpha'$ is equivalent to\/ $\alpha$ and\/
	\begin{equation}\eqlabel{alpha'}
		\Wsep^{\alpha'} \= \Wint.
	\end{equation}
	In particular,
		\[ {[w]}_{\Wsep^{\alpha'}} \= \set{w'\in\W}{w'\eqsep w} \qquad\forall\, w\in\W. \]
	Furthermore, $\Wsep^{\alpha'}$ is countably generated as well as
	jointly measurable (for the modified system with $\alpha$ replaced by $\alpha'$).
\end{proposition}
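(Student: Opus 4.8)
\emph{Proof idea.}
The plan is to produce $\alpha'$ by passing to the quotient of $\W$ under sensory equivalence, choosing one representative world state in each $\eqsep$-class, and letting $\alpha'$ act on an arbitrary state exactly as $\alpha$ acts on the representative of its class. The first task is to make this quotient technically harmless. Each $P_S^{a_{\Bbb N}}$ is continuous by \lemref{PSacont} and takes values in the Hausdorff space $\P(\S^{\Bbb N})$, so $\set{(w,w')\in\W\times\W}{P_S^{a_{\Bbb N}}(w)=P_S^{a_{\Bbb N}}(w')}$ is closed; hence $\eqsep$ is a closed equivalence relation on the compact metrisable space $\W$. Consequently the quotient $Q\defeq\W/{\eqsep}$ with the quotient topology is again compact metrisable (a continuous Hausdorff image of a compact metrisable space), the projection $q\colon\W\to Q$ is continuous, and by the universal property of quotients each $P_S^{a_{\Bbb N}}$ factors as a \emph{continuous} map $\widetilde P_S^{a_{\Bbb N}}\colon Q\to\P(\S^{\Bbb N})$. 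These factor maps separate the points of $Q$ by definition of $\eqsep$, so the real-valued continuous functions $x\mapsto\widetilde P_S^{a_{\Bbb N}}(x)(f)$, $a_{\Bbb N}\in\A^{\Bbb N}$, $f\in C_b(\S^{\Bbb N})$, separate points of the compact space $Q$ and, by Stone--Weierstrass, generate $\B(Q)$. Pulling back along $q$ gives $\Wint=q^{-1}(\B(Q))$; in particular $\Wint$ is countably generated because $\B(Q)$ is, and its atoms are precisely the fibres of $q$, i.e.\ the $\eqsep$-classes.

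Next I apply a measurable selection theorem (Federer--Morse, respectively Kuratowski; cf.\ \apref{quotient}) to the continuous surjection $q$ between the compact metrisable spaces $\W$ and $Q$ to obtain a \emph{Borel} section $s\colon Q\to\W$, and set $r\defeq s\circ q\colon\W\to\W$. Then $r$ is Borel, $q\circ r=q$, and in particular $r(w)\eqsep w$ for every $w\in\W$. I define
\[ \alpha'_a(w)\;\defeq\;\alpha_a\bigl(r(w)\bigr),\qquad a\in\A,\; w\in\W, \]
which is a Borel Markov kernel and, by construction, constant on each $\eqsep$-class, i.e.\ on the atoms of $\Wint$. Since $\Wint$ is a countably generated sub-\sigalg\ of the Borel \sigalg\ of the Souslin space $\W$, Blackwell's theorem turns ``constant on the atoms of $\Wint$'' into ``$\Wint$-measurable''; thus $\alpha'_a\colon(\W,\Wint)\to\P(\W,\Wint)$ is measurable for every $a$. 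As $\beta$ (being the first marginal of every $P_S^{a_{\Bbb N}}$, hence $\sigma(P_S^{a_{\Bbb N}})$-measurable) is $\Wint$-measurable, $\Wint$ is separately measurable for the system with $\alpha$ replaced by $\alpha'$, so \lemref{sep} applied to that system yields $\Wsep^{\alpha'}\subseteq\Wint$.

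It remains to prove that $\alpha'$ is equivalent to $\alpha$: then the modified system has the same sensor laws $P_{S,\alpha'}^{a_{\Bbb N}}=P_S^{a_{\Bbb N}}$, hence the same intrinsic \sigalg\ (namely $\Wint$), and \propref{atoms} applied to the modified system gives $\Wint\subseteq\Wsep^{\alpha'}$; combined with the previous inclusion this is $\Wsep^{\alpha'}=\Wint$, which also gives the stated identity for ${[w]}_{\Wsep^{\alpha'}}$. Equivalence is proved by induction on $n$, comparing for the two systems the finite-dimensional marginal $R_n^{a_1,\dots,a_{n-1}}(w)$, the law of $(S_1,\dots,S_n)$. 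For $n=1$ both equal $\beta(w)$. For the step, the Markov form of the explicit formula for $P_S^{a_{\Bbb N}}$ preceding the definition of $\eqsep$ gives, for the original system,
\[ R_{n+1}^{a_1,\dots,a_n}(w) \;=\; \beta(w)\otimes\!\int_\W R_n^{a_2,\dots,a_n}(w_2)\,\alpha_{a_1}(w)(dw_2), \]
and the analogous identity with $\alpha_{a_1}(w)$ replaced by $\alpha'_{a_1}(w)=\alpha_{a_1}(r(w))$ for the modified system; here $\otimes$ is the product measure on $\S\times\S^{n}\cong\S^{n+1}$, and the first factor $\beta(w)$ is the same because $r(w)\eqsep w$. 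Writing the original identity at $w$ and at $r(w)$, using $R_{n+1}^{a_1,\dots,a_n}(w)=R_{n+1}^{a_1,\dots,a_n}(r(w))$ and $\beta(w)=\beta(r(w))$ (both since $r(w)\eqsep w$), and cancelling the probability measure $\beta(w)$ in the first coordinate, one obtains $\int_\W R_n^{a_2,\dots,a_n}(w_2)\,\alpha_{a_1}(w)(dw_2)=\int_\W R_n^{a_2,\dots,a_n}(w_2)\,\alpha_{a_1}(r(w))(dw_2)$; together with the induction hypothesis $R_n=R_n'$ this yields $R_{n+1}'=R_{n+1}$. Since cylinder sets generate $\B(\S^{\Bbb N})$, equality of all finite-dimensional marginals gives $P_{S,\alpha'}^{a_{\Bbb N}}=P_S^{a_{\Bbb N}}$, i.e.\ $\alpha'$ is equivalent to $\alpha$. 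Finally, $\Wsep^{\alpha'}=\Wint$ is countably generated, and since $\alpha'$ is a Borel kernel, \propref{jointmeasurable} applied to the modified system shows $\Wsep^{\alpha'}$ is jointly measurable there.

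The main obstacle is the first paragraph: securing enough regularity of the quotient under sensory equivalence---that $\eqsep$ is closed, that $Q$ is compact metrisable with $\Wint=q^{-1}(\B(Q))$ countably generated, and that $q$ admits a Borel section---so that the naive ``pick a representative in each class'' recipe really delivers a Borel kernel $\alpha'$ that is both constant on $\eqsep$-classes and equivalent to $\alpha$. Once this is set up, the equivalence of $\alpha'$ and $\alpha$ (given the right inductive formulation and the cancellation of the $\beta$-factor) and the remaining bookkeeping via \lemref{sep}, \propref{atoms} and \propref{jointmeasurable} are comparatively routine.
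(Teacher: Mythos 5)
Your proof is correct and follows essentially the same route as the paper: both construct a Borel map $\W\to\W$ selecting a representative of each $\eqsep$-class via a measurable selection theorem applied to the continuous surjection onto a compact metrisable quotient, set $\alpha'_a=\alpha_a\circ(\text{selector})$, verify equivalence by the same induction on cylinder sets, and obtain $\Wsep^{\alpha'}=\Wint$ from the two minimality inclusions. The only cosmetic difference is that you realise the quotient abstractly as $\W/{\eqsep}$ with the quotient topology (plus Federer--Morse and a Stone--Weierstrass argument for countable generation), whereas the paper realises it as the image $F(\W)\subseteq\P(\S^{\Bbb N})^{\A^{\Bbb N}}$ of the map $F(w)=\bigl(a_{\Bbb N}\mapsto P_S^{a_{\Bbb N}}(w)\bigr)$ and applies the selection theorem there; since $F$ is a closed, hence quotient, map, these are the same construction.
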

\begin{enproof}
	\item Let $X \defeq \P(\S^{\Bbb N})^{\A^{\Bbb N}}$ be the set of mappings from action sequences to
		distributions of sensor sequences, equipped with product topology. Given an initial state $w$ of
		the world, denote by $F(w)$ the corresponding kernel from action sequences to sensor sequences,
		i.e.\ $F\colon\W\to X$, $F(w) \= \(a_{\Bbb N} \mapsto P_S^{a_{\Bbb N}}(w) \)$. Note that $F$
		generates $\Wint$, i.e.\ $\sigma(F)=\Wint$.
		Because every $P_S^{a_{\Bbb N}}$ is continuous (\lemref{PSacont}) and $X$ carries the product
		topology, $F$ is a continuous function from the compact metrisable space $\W$ into the Hausdorff
		space $X$. In particular, the image $F(\W)$ is also compact and metrisable.
	\item We can apply a classical selection theorem, e.g.\ Theorem~6.9.7 in \cite{BogachevII}, and obtain a
		measurable right-inverse $G\colon F(\W)\to \W$ with $F\circ G \= \id_{F(\W)}$.
		Define $\varsigma \defeq G\circ F$. Then $\varsigma$ is measurable, and $\sigma(\varsigma)
		\subseteq \sigma(F)$. On the other hand, $\sigma(F) = \sigma(F\circ G \circ F) \subseteq
		\sigma(\varsigma)$. Hence,
		\begin{equation}\eqlabel{atoms}
			\sigma(\varsigma)  \= \sigma(F) \= \Wint.
		\end{equation}
		Define $\alpha'_a \defeq \alpha_a\circ\varsigma$ for every $a\in \A$.
	\item A simple induction shows that $\alpha'$ is indeed equivalent to $\alpha$:
		For $B\= B_1\times\cdots \times B_n\times \S \times \cdots \in \B(\S^{\Bbb N})$
		and $C\defeq B_2\times \cdots \times B_n \times \S\times \cdots$, we obtain by induction over
		$n$
		\[ 
		P_{S, \alpha'}^{a_{\Bbb N}} (w)(B) \= \int \beta(w ; B_1)
				P_{S, \alpha'}^{a_{\{2,3,\dots\}}} (\cdot) (C) \, {\rm d} \alpha_{a_1}'(w) 
			\= P_S^{a_{\Bbb N}}\(\varsigma(w)\)(B) \= P_S^{a_{\Bbb N}} ( w )(B)
		\]
	\item We claim that $\Wsep^{\alpha'} \= \sigsig$. Indeed, $\alpha'_a$ is $\sigsig$\nbd measurable
		by definition, and $\beta$ is $\sigsig$\nbd measurable, because $\beta(w)$ is a marginal of
		$F(w)(a_{\Bbb N})$ for any $a_{\Bbb N}$. Therefore, $\sigsig$ is separately
		measurable in the modified system and $\Wsep^{\alpha'}   \subseteq \sigsig$.
		On the other hand, $P_S^{a_{\Bbb N}} = P_{S, \alpha'}^{a_{\Bbb N}}$ is $\Wsep^{\alpha'}$\nbd measurable, thus the same holds for $F$ and
		$\sigsig=\sigma(F) \subseteq \Wsep^{\alpha'}  $. Hence $\Wint=\Wsep^{\alpha'}$ follows from
		\eqref{atoms}.
	\item Since $\varsigma$ is a function into a space with countably generated \sigalg,
		$\Wsep^{\alpha'} \=\sigma(\varsigma)$ is countably generated. In particular, it is jointly measurable by
		\propref{jointmeasurable}.
\end{enproof}

\begin{remark}[Non-compact $\W$]
	For a non-compact world $\W$, we can still obtain an equivalent $\alpha'$ satisfying \eqref{alpha'} if
	we relax the condition that $\alpha'$ needs to be Borel measurable. Instead, it is only universally
	measurable, i.e.\  $\mu$\nbd measurable for every $\mu\in\P(\W)$. To see this, just replace the
	selection theorem used in the proof of \propref{compact} by a selection theorem for Souslin spaces,
	e.g.\ Theorem~6.9.1 in \cite{BogachevII}. The drawback is that the universal \sigalg\ is not countably
	generated and we do not obtain joint measurability of $\Wsep^{\alpha'}$.
\end{remark}


\section{Summary and conclusions} 
\label{conclusions}
\subsection{Summary of our definitions and results}
We started with the mathematical description of the agent's interaction with the world in terms of a causal diagram. This lead us  
to the definition of the agent's sensorimotor loop (see Figure \ref{sensomot}), which formalises Uexk{\"u}ll's fundamental 
notion of a {\em function-circle\/}.       
The sensorimotor loop contains, as part of the description, 
a reference world, referred to as the {\em outer world\/} by Uexk{\"u}ll. It is considered to be objective in the sense that it sets constraints on the distinctions that {\em any\/} observer can make in that world. 
This is formalised in terms of a ``large'' $\sigma$-algebra which contains all reasonable distinctions (the Borel $\sigma$-algebra of the world).  
We defined  
sub-$\sigma$-algebras ${\mathcal W}_{\rm ext}$ and ${\mathcal W}_{\rm int}$ that represent two agent specific perspectives. 
The first one, 
introduced in Section \ref{firstsigma}, is  
based on two requirements:
\begin{enumerate}
\item First, we assume that ${\mathcal W}_{\rm ext}$ contains the distinctions in the world that the agent can make based on the immediate response of its sensors. 
As the corresponding mechanism is encoded by the Markov kernel $\beta$, this means that the $\sigma$-algebra
generated by $\beta$ should be contained in ${\mathcal W}_{\rm ext}$, that is $\sigma(\beta) \subseteq {\mathcal W}_{\rm ext}$ (see \lemref{sep}).
These distinctions seem to be closely related to
Uexk{\"u}ll's {\em world-as-sensed\/} (translation of the original term {\em enfache Merkwelt\/} \cite{Uexkull26}, page 132). In addition to these aspects of the world, the agent is capable of making also 
mediated or distal distinctions. We believe that the mediated distinctions correspond to those aspects of the world that 
Uexk{\"u}ll describes as {\em the higher grades of the world-as-sensed\/} (translation of the original term {\em h\"ohere Stufen der Merkwelt\/} \cite{Uexkull26}, page 140).
In order to incorporate these mediated distinctions, we impose the next condition on ${\mathcal W}_{\rm ext}$.         
\item With this second condition we basically assume that the system is closed in the sense that all mediated distinctions in the world are taken into account. This is formalised by the iteration formula for ${\mathcal W}_n$ in  
\lemref{sep}. The main insight of this lemma is that the closedness is equivalently expressed by the invariance condition 
\begin{equation}  \label{invariance}
    \alpha_a^{-1}({\mathcal W}) \, \subseteq \, {\mathcal W} , \qquad a \in \A .
\end{equation}
Stated differently, incorporating all mediated distinctions, based on an initial set of prime distinctions, is equivalent to enlarging this set until the condition (\ref{invariance}) is satisfied.   
Our $\sigma$-algebra ${\mathcal W}_{\rm ext}$ is then the smallest $\sigma$-algebra ${\mathcal W}$ that contains $\sigma(\beta)$ 
and satisfies this invariance.  Condition (\ref{invariance}) is related to the closedness of dynamical systems studied in \cite{PfanteAy, comparison}.
\end{enumerate} 
The definition of ${\mathcal W}_{\rm ext}$ is quite natural and might appear as the right formalisation of Uexk{\"u}ll's {\em Umwelt\/} in terms of a $\sigma$-algebra of distinctions.  
However, the invariance condition requires knowledge about the mechanisms of the world, formalised in terms of the Markov kernel $\alpha$. 
Therefore, in Section \ref{sensequiv} we introduced another $\sigma$-algebra, which does not require this knowledge and is defined in an intrinsic manner. It is based on the following sensory equivalence relation: 
We identify two world states $w$ and $w'$ if they induce
the same sensor process, given any sequence $a_1,a_2,\dots$ of actuator states. 
We define ${\mathcal W}_{\rm int}$ to be the $\sigma$-algebra associated with this relation. It consists of those distinctions the agent can make in the world that  involve both the sensors and actuators. In this sense, 
${\mathcal W}_{\rm int}$ takes into account Uexk{\"u}ll's {\em perceptual world\/} ({\em Merkwelt\/}, referred to as {\em world-as-sensed\/} in  \cite{Uexkull26}) and {\em effector world\/} ({\em Wirkwelt\/}, referred to as 
{\em world of action\/} in  \cite{Uexkull26}). We do not, however, separate these two worlds as they are intertwined and define, together,  ${\mathcal W}_{\rm int}$.

\begin{quote}
``We no longer regard animals as mere machines, but as subjects whose essential activity consists of perceiving and acting. We thus unlock the gates that lead to other 
realms, for all that a subject perceives becomes his {\em perceptual world\/} and all that he does, his {\em effector world\/}. Perceptual and effector worlds together form a closed unit, the
{\em Umwelt\/}.'' (\cite{Uexkull34}, page 320)  
\end{quote}

In Section \ref{separa} below,
we comment on other ways to combine 
the {\em perceptual world\/} and the {\em effector world\/}. But first, let us address the following question: 
How does ${\mathcal W}_{\rm int}$ relate to ${\mathcal W}_{\rm ext}$? Their relation is actually quite interesting. 
First of all, as one would like to have, ${\mathcal W}_{\rm int}  \subseteq {\mathcal W}_{\rm ext}$, which is the main content of \propref{atoms}.
This means that one can attribute more distinctions to the agent, if the mechanism $\alpha$ of the world is known. Or, stated differently, the 
agent can operate on the basis of distinctions that are not internally indentifiable in terms of its sensors and actuators. 
Note that there is one apparent limitation of ${\mathcal W}_{\rm int}$: It is not invariant in the sense of (\ref{invariance}), which means that it does not describe a closed system. 
On the other hand, making it invariant by extending it sufficiently already leads to our previous $\sigma$-algebra ${\mathcal W}_{\rm ext}$. 
 However, the violation of (\ref{invariance}) can only be seen from outside. The agent has no access to the mechanism $\alpha$ from its intrinsic perspective, and it simply does not see whether or not (\ref{invariance})
 is statisfied. It is quite surprising that, according to our \propref{compact}, 
 it is possible for the agent to imagine a mechanism $\alpha'$ that is compatible with ${\mathcal W}_{\rm int}$ in the sense that it satisfies (\ref{invariance}) where 
 $\alpha$ is replaced by $\alpha'$. With this modification of the mechanism, we have ${\mathcal W}_{\rm int} = {\mathcal W}^{\alpha'}_{\rm ext}$, where ${\mathcal W}^{\alpha'}_{\rm ext}$ is defined in the same 
 way as ${\mathcal W}_{\rm ext}$ but with the mechanism $\alpha'$ instead of $\alpha$. Even if the agent is actually operating on the basis of distinctions 
 that are not identifiable from its intrinsic perspective, it is always possible to imagine different mechanisms that only involve the identifiable distinctions.
This is why we think that ${\mathcal W}_{\rm int}$ is the right object for describing the {\em Umwelt\/} of an agent.     

\subsection{On decompositions of the {\em Umwelt\/} into {\em Merkwelt\/} and {\em Wirkwelt}} \label{separa}
Our approach starts with a set of prime distinctions and extends this set by taking into account mediated distinctions that are generated in terms of the actuators of the agent.   
This way, the {\em perceptual world\/} ({\em Merkwelt\/}) and the {\em effector world\/} ({\em Wirkwelt\/}) are incorporated into the {\em Umwelt\/} in an asymmetric manner. 
One could also try to define both worlds separately and integrate them in a symmetric way. Our impression is, however, that this approach has limitations, which we are going to 
briefly explain. \\

As argued above, ${\mathcal W}_{\rm merk} := \sigma(\beta)$ already models the 
{\em world-as-sensed\/}, which we consider, for the moment, to be the same as the {\em perceptual world\/}, as both terms are translations of {\em Merkwelt\/} (note, however, that {\em world-as-sensed\/} denotes its 
simple form,  {\em einfache Merkwelt\/}). Constructing a corresponding {\em effector world\/} as $\sigma$-algebra on $\W$ appears less natural. To see this, compare in  Figure \ref{sensomot} 
the causal link from the world state $W_n$ to the sensor state $S_n$ with the 
causal causal link from the actuator state $A_{n-1}$ to the world state $W_{n}$. One is directed away from and one toward $W_n$. Generally, it is natural to ``pull back'' distinctions, as we did for the sensor kernel. 
But without further assumptions there is no natural way to ``push forward'' distinctions, which would be required for the definition of an {\em effector world\/} as $\sigma$-algebra on $\W$. 
Instead of specifying these assumptions and discussing related technical problems, let    
us simply assume that we already have both, the {\em perceptual world\/} 
${\mathcal W}_{\rm merk}$ and the {\em effector world\/} ${\mathcal W}_{\rm wirk}$. 
How would one combine them so that they ``form a closed unit''
as Uexk{\"u}ll describes it in the above quote? There are two natural ways to combine the two $\sigma$-algebras: The intersection ${\mathcal W}_\cap := {\mathcal W}_{\rm merk} \cap {\mathcal W}_{\rm wirk}$, which consists of all 
distinctions, that are contained in both worlds, and the union, ${\mathcal W}_\vee := \sigma(  {\mathcal W}_{\rm merk} \cup {\mathcal W}_{\rm wirk} )$, which is the smallest set of distinctions so that both worlds are contained in it. 
We argue that both choices are limited by applying them to two
special cases of the senserimotor loop: In case one, we assume that the agent can only observe but not act, which we refer to as a {\em passive observer\/}. Any reasonable definition 
of the $\sigma$-algebra ${\mathcal W}_{\rm wirk}$ 
should lead to a trivial 
{\em effector world\/} in this case, that  is ${\mathcal W}_{\rm wirk} = \{\emptyset, \W\}$. In case two,
we assume that the agent can act but has no sensors to perceive the consequences of its actions. We refer to this second agent as a {\em blind actor\/}. For the blind actor, we obviously 
have a trivial {\em perceptual world\/}, that is ${\mathcal W}_{\rm merk} = \{\emptyset, \W\}$.
The following 
table summarises the various resulting {\em Umwelten\/} for these two cases: 

\begin{center}
\begin{tabular}{ c || c  | c  | c } 
                                   & ${\mathcal W}_{\cap}$ &  ${\mathcal W}_{\vee}$ &  ${\mathcal W}_{\rm int}$ \\ \hline \hline
  passive observer  & no {\em Umwelt\/}  & equals  {\em Merkwelt\/}  &  contains {\em Merkwelt\/}     \\
  blind actor              & no {\em Umwelt\/}  & equals {\em Wirkwelt\/}    & no {\em Umwelt\/}         \\ 
\end{tabular}
\end{center}

Let us discuss and compare these simple but instructive outcomes. First, we see that for both, the passive observer and the blind actor, the intersection
${\mathcal W}_{\cap}$ is the trivial $\sigma$-algebra $\{\emptyset, \W \}$, which we interpret as a trivial or no {\em Umwelt\/}. 
We argue, however, that one should attribute a non-trivial {\em Umwelt\/} to a passive observer, if the observed world is rich enough. The reason is that the agent is capable of making distinctions based on its sensors,  
without the involvement of its actuators.   
How about the union ${\mathcal W}_\vee$? In this case, we argue that the outcome for the blind actor is not satisfying. A blind actor 
generates effects in the outer world which are, in principle, visible from the perspective of an external observer in terms of his own set of distinctions. 
But the blind actor himself has no instantiation of these distinctions made from outside. 
Would one still attribute these distinctions to the {\em Umwelt\/} of the blind actor? We think that this should not be the case.   
Finally, as already stated, our definition ${\mathcal W}_{\rm int}$ treats the two worlds in an asymmetric way. This asymmetry is also expressed by the fact that 
the passive observer has a possibly non-trivial {\em Umwelt\/} (the distinctions are made by the passive observer himself), whereas the blind actor   
has a trivial one (the effects are visible only in terms of distinctions made by an external observer).
Our way of integrating sensing and acting in conceptually different from the way sketched in this section and can be summarised as follows: 
The prime object is the {\em perceptual world\/}! Starting with an initial {\em perceptual world\/} (world-as-sensed), the agent can generate more and more distinctions by utilising its actuators. 
Thereby, the {\em perceptual world\/} is gradually enlarged until it incorporates all distinctions of the agent's {\em Umwelt\/} ({\em higher grades of the world-as-sensed\/}).

\subsection{How to treat the case of multiple agents}
We conclude with a rough description of how one can study multiple agents based on the developed tools of this article.  
For this, we have to couple the individual sensorimotor loops, which Uexk{\"u}ll beautifully describes as follows: 

\begin{quote}
``The function-circles of the various animals connect up with one another in the most various ways, and together form the function-world of living organisms, within which plants are included. 
For each individual animal, however, its function-circles constitute a world by themselves, within which it leads its existence in complete isolation.'' (\cite{Uexkull26}, page 126) 
\end{quote}

We sketch 
our ideas on this subject by considering the case of only two agents. 
The diagram in Figure \ref{sensomottwo} shows how the sensorimotor loops of two agents are intertwined. 
Note that each agent $i$ has its own mechanisms $\beta^{(i)}$, $\varphi^{(i)}$, $\pi^{(i)}$, except that there is only one mechanism $\alpha$ which governs the transitions of the common world given the actuator states of both agents. 
Let us first take the perspective of agent one. From this perspective, the outer world includes agent two, that is the variable $W_n^{(1)}$ contains $W_n$, $S^{(2)}_n$, $C^{(2)}_n$, and $A^{(2)}_n$ 
(see Figure \ref{SML_6} (a)). The perspective of agent two is symmetric. Here, the outer world $W_n^{(2)}$ of agent two includes the corresponding variables     
$W_n$, $S^{(1)}_n$, $C^{(1)}_n$, and $A^{(1)}_n$ (see Figure \ref{SML_6} (b)). Obviously, in principle the two outer worlds are not contained in each other and, in particular, they are not identical. Furthermore, they share the process $W$ 
which can be considered as a common world of the two agents. \\ 

\renewcommand{\actkern}{{\qquad\qquad\raisebox{5ex}{$\alpha$}}}
\newcommand{\intkerna}{{\qquad\qquad\raisebox{1ex}{$\varphi^{(1)}$}}}
\newcommand{\senkerna}{\textstyle \beta^{(1)}}
\newcommand{\polkerna}{{\textstyle\pi}^{(1)}}
\newcommand{\intkernb}{{\qquad\qquad\raisebox{-3ex}{$\varphi^{(2)}$}}}
\newcommand{\senkernb}{{\textstyle\beta^{(2)}}}
\newcommand{\polkernb}{{\textstyle\pi}^{(2)}}
\begin{figure}[h]
\begin{center}        
\[ \xymatrix{
          \ar[rr]_\intkernb        &       & C^{(2)}_{n-1} \ar[rr]_\intkernb \ar[d]^\polkernb  &                               & C^{(2)}_n   \ar[rr]_\intkernb \ar[d]^\polkernb &    &  C^{(2)}_{n+1}  \ar[d]^\polkernb  \\
           \ar[dr]    &                       S^{(2)}_{n-1}  \ar[ur]       & A^{(2)}_{n - 1} \ar[dr]  &                S^{(2)}_n  \ar[ur]       & A^{(2)}_{n} \ar[dr] &  S^{(2)}_{n+1}     \ar[ur]  &  A^{(2)}_{n + 1} \\
	\ar[r]&	W_{n-1}\ar[rr]\ar[d]^\senkerna\ar[u]_\senkernb & & W_n\ar[rr]\ar[d]^\senkerna\ar[u]_\senkernb & & W_{n+1}\ar@{-}[r]\ar[d]^\senkerna\ar[u]_\senkernb &\\
	\ar[ur]^\actkern & S^{(1)}_{n-1}\ar[dr] & A^{(1)}_{n-1} \ar[ur]^\actkern & S^{(1)}_n\ar[dr] & A^{(1)}_n\ar[ur]^\actkern & S^{(1)}_{n+1}\ar[dr]&A^{(1)}_{n+1}\\
	\ar[rr]^\intkerna  &&{C^{(1)}_{n-1}} \ar[rr]^\intkerna \ar[u]_\polkerna & & {C^{(1)}_n} \ar[rr]^\intkerna \ar[u]_\polkerna & & {C^{(1)}_{n+1}}\ar[u]_\polkerna
} \]
\end{center}
\caption{The sensorimotor loops of two agents.}
\label{sensomottwo}
\end{figure}
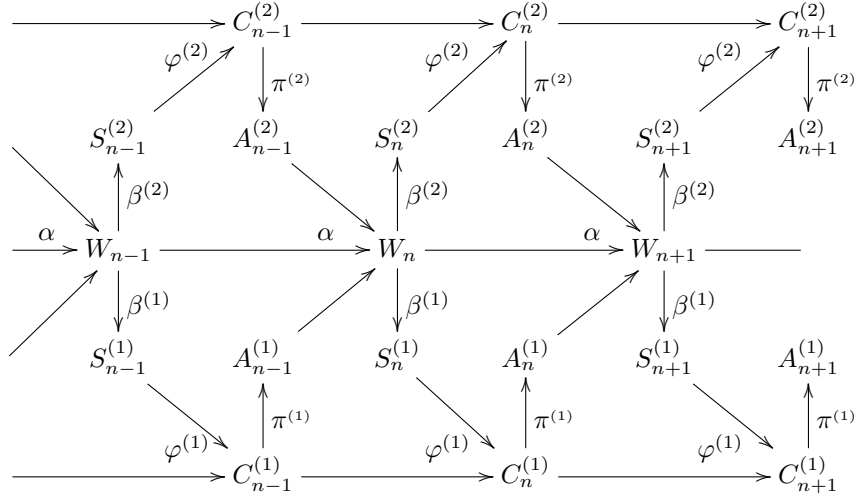

\begin{figure}[h]
\begin{center}        
          \includegraphics[width=15cm]{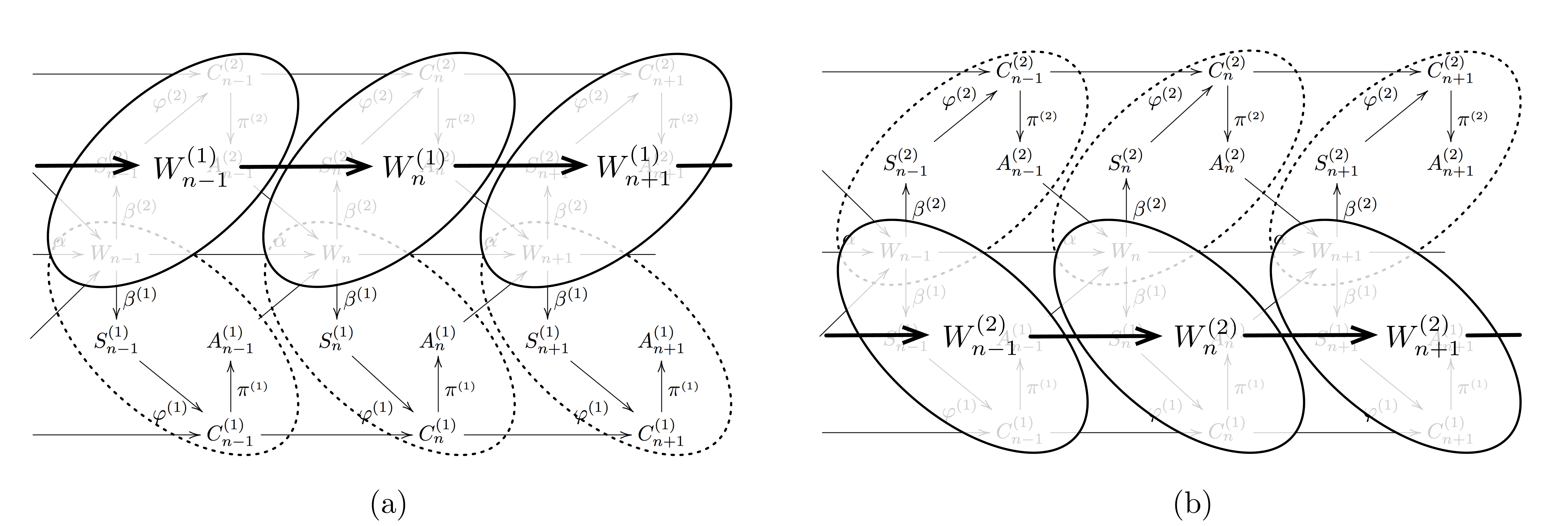} 
                \vspace{-8mm}
\end{center}
\caption{Overlapping but distinct outer worlds of two agents. (a) Outer world of agent one which includes inner world of agent two, and (b) outer world of agent two which includes inner world of agent one.}
\label{SML_6}
\end{figure}

\noindent
We can express the fact that the two outer worlds are different more formally by 
\begin{eqnarray}
    \mathfrak{B}(\W^{(1)}) & = & \mathfrak{B}(\W) \otimes \mathfrak{B}(\S^{(2)}) \otimes \mathfrak{B}(\C^{(2)}) \otimes \mathfrak{B}(\A^{(2)}) \, , \\ 
    \mathfrak{B}(\W^{(2)}) & = & \mathfrak{B}(\W) \otimes \mathfrak{B}(\S^{(1)}) \otimes \mathfrak{B}(\C^{(1)}) \otimes \mathfrak{B}(\A^{(1)})  \, ,
\end{eqnarray}       
and we have ${\mathcal W}^{(i)}_{\rm int} \subseteq {\mathcal W}^{(i)}_{\rm ext} \subseteq  \mathfrak{B}(\W^{(i)})$, $i = 1,2$. The intersections of these sets of distinctions will be contained in the common world, that is
\begin{equation}
    {\mathcal W}^{(1)}_{\rm int} \cap {\mathcal W}^{(2)}_{\rm int} \; \subseteq \; {\mathcal W}^{(1)}_{\rm ext} \cap {\mathcal W}^{(2)}_{\rm ext} \; \subseteq \;  \mathfrak{B}(\W) \, .
\end{equation}
This highlights an important point. The two agents can share some distinctions in the world. However, these have to be contained in the common world. Each agent can make further distinctions in its respective 
outer world, so that the {\em Umwelten\/} will be in general different. 
If we consider a probability measure, however, then we can actually identify distinctions $A_1 \in  {\mathcal W}^{(1)}_{\rm int} $ of agent one with distinctions 
$A_2\in  {\mathcal W}^{(2)}_{\rm int}$ of agent two if their intersection $A_1 \cap A_2$ has full probability. 
This way, the two agents can, in principle, synchronise and reach some consensus on their respective intrinsic worlds.
The more generic situation will be, that the intrinsic worlds are similar rather than perfectly 
identical with respect to the underlying probability measure. In order to quantify how close the intrinsic worlds ${\mathcal W}^{(i)}_{\rm int}$, the {\em Umwelten\/}, of individual agents are, 
appropriate distance measures for $\sigma$-algebras 
will be required. Such measures have been studied in the probability theory and statistics literature \cite{Boylan71, Neveu72, Rogge74}, and might be applicable to the present context.
However, it is not within the scope of this article to present and discuss these measures.    

\section*{Acknowledgement} 
Nihat Ay is grateful for stimulating discussions with Keyan Ghazi-Zahedi and Guido Mont{\'u}far.
The authors would like to thank J\"urgen Jost for his helpful comments. 
\bibliography{bib-math}

\newcommand{\etalchar}[1]{$^{#1}$}
\begin{thebibliography}{PBO{\etalchar{+}}14}

\bibitem[AGZ14]{AyZahedi}
N.~Ay and K.~Ghazi-Zahedi.
\newblock On the causal structure of the sensorimotor loop.
\newblock In {\em M. Prokopenko (ed.)}, Guided Self-Organization: Inception.
  Springer, 2014.

\bibitem[Bau96]{BauerEng}
H.~Bauer.
\newblock {\em Probability Theory}.
\newblock Walter de Gruyter, 1996.

\bibitem[Bog07a]{BogachevI}
V.~I. Bogachev.
\newblock {\em Measure Theory, Volume {I}}.
\newblock Springer, 2007.

\bibitem[Bog07b]{BogachevII}
V.~I. Bogachev.
\newblock {\em Measure Theory, Volume {II}}.
\newblock Springer, 2007.

\bibitem[Boy71]{Boylan71}
E.~S. Boylan.
\newblock Equiconvergence of martingales.
\newblock {\em Ann. Math. Statist.}, 42:552--559, 1971.

\bibitem[Kar75]{Karr}
A.~F. Karr.
\newblock Weak convergence of a sequence of {M}arkov chains.
\newblock {\em Z. Wahrscheinlichkeitstheorie verw. Gebiete}, 33:41--48, 1975.

\bibitem[Nev72]{Neveu72}
J.~Neveu.
\newblock Note on the tightness of the metric on the set of complete sub
  {$\sigma $}-algebras of a probability space.
\newblock {\em Ann. Math. Statist.}, 43:1369--1371, 1972.

\bibitem[PA15]{PfanteAy}
O.~Pfante and N.~Ay.
\newblock Operator-theoretic identification of closed sub-systems of dynamical
  systems.
\newblock {\em Discontinuity, Nonlinearity and Complexity}, 1:91--109, 2015.

\bibitem[PB07]{PfeiferBongard}
R.~Pfeifer and J.~Bongard.
\newblock {\em How the body shapes the way we think}.
\newblock MIT Press, 2007.

\bibitem[PBO{\etalchar{+}}14]{comparison}
O.~Pfante, N.~Bertschinger, E.~Olbrich, N.~Ay, and J.~Jost.
\newblock Comparison between different methods of level identification.
\newblock {\em Advances in Complex Systems}, 17:1450007, 2014.

\bibitem[Pea00]{Pearl}
J.~Pearl.
\newblock {\em Causality: Models, Reasoning and Inference}.
\newblock Cambridge University Press, 2000.

\bibitem[Rog74]{Rogge74}
L.~Rogge.
\newblock Uniform inequalities for conditional expectations.
\newblock {\em Ann. Probability}, 2:486--489, 1974.

\bibitem[TP10]{TishbyPolani}
N.~Tishby and D.~Polani.
\newblock Information theory of decisions and actions.
\newblock {\em In Cutsuridis, V., Hussain, A., and Taylor, J., editors,
  Perception-Action Cycle: Models, Architecture and Hardware}, pages 601--636,
  2010.

\bibitem[Uex26]{Uexkull26}
J.~Von Uexk{\"u}ll.
\newblock {\em Theoretical Biology}.
\newblock New York: Harcourt, Brace \& Co., 1926.

\bibitem[Uex34]{Uexkull34}
J.~Von Uexk{\"u}ll.
\newblock {\em A stroll through the worlds of animals and men}.
\newblock In: Instructive Behavior, ed. K. Lashley. International University
  Press, 1934.

\bibitem[Uex14]{Uexkull}
J.~Von Uexk{\"u}ll.
\newblock {\em Umwelt und Innenwelt der Tiere}.
\newblock J. Johann und B. Herrmann (eds.), Klassische Texte der Wissenschaft.
  Springer, 2014.

\bibitem[ZAD10]{ZahediAyDer}
K.~Zahedi, N.~Ay, and R.~Der.
\newblock Higher coordination with less control: A result of information
  maximization in the sensori-motor loop.
\newblock {\em Adaptive Behavior}, pages 338--355, 2010.

\end{thebibliography}

\appendix
\section{Appendix: state reduction and quotient construction}\seclabel{quotient}

Let $(X, \F)$ be a measurable space. The \define{atom} of $\F$ containing $x\in X$ and the set of atoms of $\F$
are defined as
	\[ \at{x} \defeq \bigcap_{x\in F\in\F} F \und \At(\F) \defeq \bset{\at{x}}{x\in X}. \]
Note that if $\F$ is countably generated, $\at{x}$ is a measurable set, $\at{x}\in\F$. In general, however,
$\at{x}$ need not be measurable. We recall Blackwell's theorem.

\theoremstyle{plain}\newtheorem*{blackwell}{Blackwell's theorem}
\begin{blackwell}
	Let\/ $X$ be a Souslin space and\/ $\F\subseteq\B(X)$ a countably generated sub-\sigalg\ of the Borel\/
	\sigalg. Then
		\[ \F \= \Bset{F\in\B(X)}{F=\bigcup_{x\in F} \at{x}}. \]
\end{blackwell}

\begin{corollary}
	Let\/ $X$ be a Souslin space, $\F\subseteq\B(X)$ a countably generated\/ \sigalg, and\/ $f\colon X\to\R$
	measurable. Then\/ $f$ is\/ $\F$\nbd measurable if and only if it is constant on the atoms of\/ $\F$.
\end{corollary}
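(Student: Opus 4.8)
The plan is to prove the two implications separately. The forward direction is purely formal and uses no topology; all the substance sits in the converse, where Blackwell's theorem does the work.

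For ``$f$ is $\F$-measurable $\Rightarrow$ $f$ is constant on the atoms of $\F$'': I would note that $\F$-measurability of $f$ means $\sigma(f)\subseteq\F$, so every atom $\at{x}$ of $\F$ is contained in the atom of $\sigma(f)$ through $x$; since $\sset{f(x)}$ is Borel in $\R$, that latter atom equals $f^{-1}\bigl(\sset{f(x)}\bigr)$, on which $f$ is constant, hence $f$ is constant on $\at{x}$ a fortiori. (Equivalently and directly: if $x,x'\in\at{x}$ with $f(x)\neq f(x')$, pick an open interval $I$ with $f(x)\in I$ and $f(x')\notin I$; then $f^{-1}(I)\in\sigma(f)\subseteq\F$ contains $x$, so $\at{x}\subseteq f^{-1}(I)$, forcing $f(x')\in I$, a contradiction.)

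For the converse, assume $f$ is constant on the atoms of $\F$ and fix an arbitrary Borel set $B\subseteq\R$. Since $f$ is Borel measurable, $f^{-1}(B)\in\B(X)$. I would first check that $f^{-1}(B)$ is a union of atoms of $\F$: if $x\in f^{-1}(B)$ and $x'\in\at{x}$, then constancy on $\at{x}$ gives $f(x')=f(x)\in B$, so $x'\in f^{-1}(B)$; thus $\at{x}\subseteq f^{-1}(B)$ and therefore $f^{-1}(B)=\bigcup_{x\in f^{-1}(B)}\at{x}$. Blackwell's theorem now applies, because $X$ is Souslin and $\F$ is a countably generated sub-\sigalg\ of $\B(X)$, and it states precisely that a Borel set equal to the union of the atoms it contains belongs to $\F$. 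Hence $f^{-1}(B)\in\F$, and as $B$ was arbitrary, $f$ is $\F$-measurable.

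The only non-routine point is the converse, and there the entire content is the observation that $f^{-1}(B)$ is a union of atoms together with the invocation of Blackwell's theorem; the hypotheses that $X$ is Souslin and that $\F$ is countably generated enter solely to make that invocation legitimate (without them a Borel set that is a union of atoms of $\F$ need not lie in $\F$).
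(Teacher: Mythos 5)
Your proposal is correct and matches the paper's intent: the paper states this corollary as an immediate consequence of Blackwell's theorem without writing out a proof, and your argument (the easy pullback-of-a-separating-Borel-set direction, plus the observation that $f^{-1}(B)$ is a Borel set equal to the union of the atoms it contains, so Blackwell's theorem places it in the sub-$\sigma$-algebra) is exactly the standard derivation being invoked. No gaps.
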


According to Blackwell's theorem, a countably generated sub-\sigalg\ of a Souslin space $X$ is uniquely
determined by the set of it atoms.  $\At(\F)$ is a partition of $X$ into $\B(X)$-measurable sets. Note, however,
that not every partition of $X$ into measurable sets is the set of atoms of a countably generated sub-\sigalg\
of $\B(X)$.

Given any measurable space $(X, \F)$, we can define the quotient space $X_\F$ as the set $\At(\F)$ of atoms of
$\F$ equipped with the final \sigalg\ $\XF$ of the canonical projection $\at{\fdot}\colon X\to\At(\F)$. Then a
set $B\subseteq X_\F$ of atoms is by definition measurable iff $\bigcup B=\bigcup_{\at{x}\in B} \at{x} \in \F$.
Note that, obviously, $B\mapsto\bigcup B$ is a complete isomorphism of boolean algebras from $\XF$ onto $\F$.
The following lemma follows easily from the standard theory of analytic measurable spaces and is one of the
reasons why Souslin spaces, rather than Polish spaces, are the ``right'' class of spaces to work with in our
setting.

\begin{lemma}
	Let\/ $X$ be a Souslin space and\/ $\F\subseteq\B(X)$ a sub-\sigalg. Then\/ $\XF$ is the Borel\/
	\sigalg\ of some Souslin topology on\/ $X_\F$ if and only if\/ $\F$ is countably generated.
\end{lemma}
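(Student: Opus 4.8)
The plan is to prove the two implications separately: for ``$\Leftarrow$'' I exhibit a concrete Souslin model of the quotient, and for ``$\Rightarrow$'' I transport countable generation back along the canonical isomorphism between $\XF$ and $\F$.

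Assume first that $\F=\sigma(F_n\colon n\in{\Bbb N})$. I begin by identifying the atoms of $\F$ explicitly. Let $f\colon X\to\{0,1\}^{\Bbb N}$ be the Borel map with $f(x)_n=1$ iff $x\in F_n$. One checks that $\at{x}=f^{-1}\bigl(f(x)\bigr)$ for every $x\in X$: ``$\subseteq$'' holds because, for each $n$, whichever of $F_n,F_n^c$ contains $x$ lies in $\F$ and imposes the $n$-th coordinate constraint; for ``$\supseteq$'' one notes that $\bset{A\subseteq X}{f^{-1}(f(x))\subseteq A\text{ or }f^{-1}(f(x))\subseteq A^c}$ is a $\sigma$-algebra containing every $F_n$, hence all of $\F$, so every $F\in\F$ with $x\in F$ contains $f^{-1}(f(x))$. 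Consequently $X_\F$ is in bijection with $f(X)$, with the projection $q$ corresponding to $f$. Since $X$ is Souslin and $f$ is Borel, $f(X)$ is a Souslin subset of the Polish space $\{0,1\}^{\Bbb N}$; transporting its subspace topology to a topology $\tau$ on $X_\F$ makes $(X_\F,\tau)$ a Souslin space, and it remains to verify $\B(\tau)=\XF$. For ``$\B(\tau)\subseteq\XF$'': if $B\in\B(\tau)$ then $f^{-1}(B)\in\B(X)$ is a union of atoms of $\F$ (as $f$ is constant on atoms), so Blackwell's theorem gives $f^{-1}(B)\in\F$, i.e.\ $B\in\XF$. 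For ``$\XF\subseteq\B(\tau)$'': if $B\in\XF$ then $f^{-1}(B)\in\F\subseteq\B(X)$; a Borel subset of a Souslin space is Souslin and the Borel image of a Souslin space is Souslin, so both $B=f\bigl(f^{-1}(B)\bigr)$ and its complement $f(X)\setminus B=f\bigl(X\setminus f^{-1}(B)\bigr)$ are disjoint Souslin subsets of the Souslin space $X_\F$ covering it; by the separation theorem for analytic sets in Souslin spaces they can be separated by a Borel set, which is then forced to equal $B$, so $B\in\B(\tau)$.

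For the converse, assume $\XF=\B(\tau)$ for some Souslin topology $\tau$ on $X_\F$. By the standard theory of Souslin spaces the Borel $\sigma$-algebra $\B(\tau)$ is countably generated, say $\B(\tau)=\sigma({\mathcal C})$ with ${\mathcal C}$ countable. As recorded above, $q^{-1}$ (that is, $B\mapsto\bigcup B$) is a Boolean isomorphism of $\XF$ onto $\F$, and since taking preimages under $q$ commutes with complements and countable unions we get $\F=q^{-1}(\XF)=q^{-1}\bigl(\sigma({\mathcal C})\bigr)=\sigma\bigl(q^{-1}(C)\colon C\in{\mathcal C}\bigr)$, which is countably generated.

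I expect the only genuinely delicate step to be the inclusion $\XF\subseteq\B(\tau)$ above: passing from ``$f^{-1}(B)$ is Borel in $X$'' to ``$B$ is Borel in the quotient'' is false for an arbitrary measurable $f$ and really relies on $X$ being Souslin --- so that $f^{-1}(B)$, together with its $f$-image, are again Souslin --- combined with the Lusin separation theorem for analytic sets. The atom computation and the bookkeeping with $\sigma$-algebras in the converse direction are routine.
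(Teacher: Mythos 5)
Your proof is correct. The paper itself gives no argument for this lemma --- it is stated in the appendix with the remark that it ``follows easily from the standard theory of analytic measurable spaces'' --- so there is no authorial proof to compare against; what you have written is a complete and correct instantiation of that standard theory, via the Marczewski-type map $f\colon X\to\{0,1\}^{\Bbb N}$ built from a countable generator, the identification of atoms with fibres of $f$, and Lusin separation to get $\XF\subseteq\B(\tau)$. You correctly isolate the one delicate point: saturated Borel sets of $X$ have Borel image in $f(X)$ only because both the image and its complement in $f(X)$ are Souslin and can be separated by a Borel set of the cube, which is exactly where the Souslin hypothesis on $X$ enters. Two small remarks: in the inclusion $\B(\tau)\subseteq\XF$ you can bypass Blackwell's theorem entirely, since a Borel set of the subspace $f(X)$ has the form $B'\cap f(X)$ with $B'$ Borel in the cube and $f^{-1}(B')\in\sigma(f)=\sigma(F_n\colon n)=\F$ directly; and in the converse direction you implicitly use that the Borel $\sigma$-algebra of any Souslin space is countably generated, which is standard (e.g.\ \cite{BogachevII}) but worth citing explicitly.
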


\end{document}